\definecolor{light-gray}{gray}{0.90}
\newcommand{\shtarkov}{\ensuremath{\textsc{shtarkov}}}
\newcommand{\mmreg}{\ensuremath{\textsc{mmreg}}}
\newcommand{\mreg}{\ensuremath{\textsc{mreg}}}
\newcommand{\mmred}{\ensuremath{\textsc{mmred}}}
\newcommand{\reg}{\ensuremath{\textsc{reg}}}
\numberwithin{equation}{section}
\newcolumntype{C}[1]{>{\centering\arraybackslash}m{#1}}
\newcolumntype{R}[1]{>{\raggedleft\arraybackslash}m{#1}}
\definecolor{mygray}{gray}{0.9}
\newtheorem{proposition}{Proposition}
\newtheorem{theorem}{Theorem}
\newtheorem{example}{Example}
\theoremstyle{plain}
\newtheorem{definition}{Definition}
\newtheorem{lemma}{Lemma}
\newcommand{\grow}{\ensuremath{\textsc{GROW}}}
\newcommand{\bd}{\ensuremath{\textsc{bd}}}
\newcommand{\nv}{\ensuremath{0}}
\newcommand{\Sgrow}{\ensuremath{{S}_{\textsc{grow}}}}
\newcommand{\Srel}{\ensuremath{{S}_{\textsc{rel}}}}
\newcommand{\sgn}{\ensuremath{\textsc{sgn}}}
\newcommand{\conv}{\ensuremath{\textsc{conv}}}
\renewcommand{\vec}[1]{\ensuremath{{\bm #1}}}
\newcommand{\cU}{\ensuremath{\mathcal{U}}}
\newcommand{\cE}{\ensuremath{\mathcal{E}}}
\newcommand{\cH}{\ensuremath{\mathcal{H}}}
\newcommand{\cP}{\ensuremath{\mathcal{P}}}
\newcommand{\cS}{\ensuremath{\mathcal{S}}}
\newcommand{\cY}{\ensuremath{\mathcal{Y}}}
\newcommand{\cX}{\ensuremath{\mathcal{X}}}
\newcommand{\cV}{\ensuremath{\mathcal{V}}}
\newcommand{\cR}{\ensuremath{\mathcal{R}}}
\newcommand{\reals}{\ensuremath{\mathbb{R}}}
\newcommand{\naturals}{\ensuremath{\mathbb{N}}}
\newcommand{\meanspace}{\ensuremath{\text{\tt M}}}
\newcommand{\Pmu}{\ensuremath{P}}
\newcommand{\pmv}{\ensuremath{\bar{p}}}
\newcommand{\Pmv}{\ensuremath{\bar{P}}}
\newcommand{\comp}{\ensuremath{\textsc{c}}}
\newcommand{\commentout}[1]{}
\title{Growth-Optimal E-Variables and an extension to the multivariate Csisz\'ar-Sanov-Chernoff Theorem}
\begin{document}
\author[12]{Peter Grünwald} \author[1]{Yunda Hao}
\author[3]{Akshay Balsubramani}
\affil[1]{Centrum Wiskunde \& Informatica, Amsterdam, The Netherlands}
\affil[2]{Leiden University, Leiden, The Netherlands}
\affil[3]{Sanofi, Waltham, USA}

\bibliographystyle{plainnat}
\maketitle
%
\begin{abstract}
 We consider growth-optimal e-variables with maximal e-power, both in an absolute and relative sense, for simple null hypotheses for a $d$-dimensional random vector, and multivariate composite alternatives represented as a set of $d$-dimensional means $\meanspace_1$. These include, among others, the set of all distributions with mean in $\meanspace_1$, and the exponential family generated by the null restricted to means in $\meanspace_1$. We show how these optimal e-variables are related to Csisz\'ar-Sanov-Chernoff bounds, first for the case 
 that $\meanspace_1$ is convex (these results are not new; we merely reformulate them)  and then for the case that $\meanspace_1$ `surrounds' the null hypothesis (these results are new). 
 
\end{abstract}

\section{Introduction}
$E$-variables present a compelling alternative  to traditional $P$-values, particularly in hypothesis tests involving optional stopping and continuation \citep{GrunwaldHK19, VovkW21, Shafer21, ramdas2023savi, henzi2021valid, Grunwald23}. 
As is well-known, there is a close connection between optimal rejection regions of anytime-valid tests at fixed level $\alpha$ and optimal anytime-valid concentration inequalities \citep{howard2021time}. 
In this paper we consider a variation of this connection in the context of a simple multivariate null and several types of composite alternatives. We study absolute and relative  {\em GROW\/} (`growth-rate optimal in the worst-case') e-variables as introduced by \cite{GrunwaldHK19},  and we show how such
e-variables are related to a concentration inequality which we call  {\em Csisz\'ar-Sanov-Chernoff\/} (CSC from now on). The 1-dimensional version of this inequality is well-known as a straightforward application of the Cram\'er-Chernoff method and  is sometimes called the {\em generic\/} Chernoff bound. The multivariate version was apparently first derived by \cite{Csiszar84} as a (significant) strengthening of Sanov's classical theorem; we review this history in Section~\ref{sec:cscconvex} beneath Theorem~\ref{thm:old}. Given all this we decided to name the bound  `Csisz\'ar-Sanov-Chernoff'. 

Formally, we consider a $d$-dimensional random vector $Y=(Y_1, \ldots, Y_d)$ supported on some (possibly finite or countable) subset $\cY$ of $\reals^d$. Whenever we speak of `a distribution for $Y$' we mean a distribution on $\cY$ equipped with its standard Borel $\sigma$-algebra. Then $\conv(\cY)$, the convex hull of $\cY$, is the set of means for such distributions; we invariably assume that the zero-vector  $(0,\ldots, 0)^{\top}$ (which we abbreviate to $\nv$ whenever convenient) is contained in $\cY$. We then let the null hypothesis $P_0$ be a distribution for $Y$  with mean equal to the zero vector: ${\mathbb E}_{P_0}[Y] = \nv$.
We fix a background measure $\rho$ and assume that $Y$ has a density $p_0$ relative to $\rho$ under $P_0$ (we may in fact take $\rho$ equal to $P_0$ itself without restricting the generality of our results).
We assume that we are given a set of means $\meanspace_1 \subset \conv(\cY)$ and an alternative (i.e. a set of distributions on $\cY$)  $\cH_1$ that is {\em compatible\/} with the given $\meanspace_1$, in the sense that, for all $\mu \in \conv(\cY)$:
\begin{equation}\label{eq:compatible}    
\mu \in \meanspace_1 \Leftrightarrow \text{\ there exists $P \in \cH_1$ with ${\mathbb E}_P[Y]= \mu$}.
\end{equation}
We consider various such $\cH_1$. In general, 
$\cH_1$ is allowed to contain distributions that do not have densities, but whenever a $P_1 \in \cH_1$ does have a density, it is denoted by small letters, i.e. $p_1$.
We further invariably assume that $P_0$ and $\cH_1$ are {\em separated\/} in terms of the mean. That is,
$
\inf_{\mu \in \meanspace_1 } \| \mu \|_2 > 0,
$
and finally, that $Y$ admits a moment generating function under $P_0$. This is a strong assumption, but it is the only strong assumption we impose. 

In Section~\ref{sec:convexm1} we consider the GROW (growth-rate-optimal in the worst-case over $\cH_1$)
e-variable $S_{\grow}$ for this scenario, assuming either that $\cH_1$ is the set $\cP_1$ of {\em all\/} $P$ with mean in some given {\em convex\/} set $\meanspace_1$, or that $\cH_1$ is the set $\cE_1$ of all elements of 
the exponential family generated by $P_0$ with means in $\meanspace_1$, or any $\cH_1$ with $\cE_1 \subset \cH_1 \subset \cP_11$ --- it turns out that the GROW e-variables coincide for all such $\cH_1$ . 
We show how this result can be derived using the celebrated Csisz\'ar-Tops{\o}e  Pythagorean theorem for relative entropy
and how it leads to the basic CSC concentration inequality. 
We do not claim novelty for this section, which mostly contains  re-formulations of results that are well-known in the information-theoretic (though perhaps not in the e-value) community. The real novelty comes in subsequent sections: 

In Section~\ref{sec:surrounding} we  move to the case that the {\em complement\/} of $\meanspace_1$ is a connected, bounded set containing $P_0$ --- a case that is more likely to arise in practical applications, is more closely related to the setting of the multivariate CLT, yet has, as far as we know, not been considered before when deriving CSC bounds, with the exception of \cite{kaufmann2021mixture} who consider a variation of this setting (we return to their results in the final Section~\ref{sec:discuss}). We denote this as the {\em surrounding\/} $\cH_1$ case, since $P_0$ is `surrounded' by $\cH_1$. 
We can extend the previous $S_{\grow}$ e-variable to this case in two ways. We may either look at the straightforward {\em absolute\/} extension of the GROW e-variable to the multivariate case, which we still denote by $\Sgrow$; or we can determine a {\em relatively\/} optimal GROW e-variable $\Srel$ that is as close as possible to the largest $\Sgrow$ among all e-variables $\Sgrow$ that can be defined on convex subsets of $\cH_1$, where, in this paper, as in \cite{jang2023tighter}, we define relative optimality in a minimax-regret sense. We characterize $\Sgrow$ for the case that $d=1$ (leaving the complicated case $d > 1$ as an open question), and we characterize $\Srel$ for general dimension $d$. 
We then show that $\Srel$ leads again to a CSC bound, Theorem~\ref{thm:nml} --- and this CSC bound is new. 

The CSC bound arrived at in Theorem~\ref{thm:nml} contains a minimax regret term $\mmreg$, which may be hard to verify in practice. In typical applications, we will have $Y = n^{-1}\sum_{i=1}^n X_i$ with $X_i$ i.i.d., for some fixed sample size $n$. Then, if the exponential family generated by $P_0$ is regular (as it will be in most cases), we know that $Y$ is equal to $\hat\mu_{|X_1, \ldots, X_n}$, the maximum likelihood estimator for the generated family, given in its mean-value parameterization. We can then think of the CSC bound as a concentration inequality that bounds the probability of the MLE falling in some set. Based on this instantiation of $Y$, we provide, in Section~\ref{sec:asymptotic_growth_rate},  based on earlier work by \cite{ClarkeB94,takeuchi1997asymptotically}, asymptotic expressions of the minimax regret term $\mmreg_n$ as a function of $n$, and show that, under regularity conditions on the boundary of the set $\meanspace_1$, it increases as
$$
\frac{d-1}{2} \log n + O(1), 
$$
It is no coincidence that this term is equal to the BIC/MDL model complexity for $d-1$-dimensional statistical family: it turns out that the boundary of $\meanspace_1$ is the relevant quantity here, and it defines a $(d-1)$-dimensional exponential family embedded within the $d$-dimensional family generated by $P_0$. 
We show how this result gives us an asymptotic expression for the absolute GROW e-variable $\Sgrow$ after all, provided that the complement of $\meanspace_1$ is a Kullback-Leibler ball around $P_0$. 

This paper is still a work in progress. In the final section we provide additional discussion of the results, a comparison to the multivariate Central Limit Theorem, and we indicate the future work we would like to add to our current results. 

\subsection{Background on GROW e-variables}
Since it will help provide the right context, in this --- and only in this --- subsection we allow composite null hypotheses $\cH_0$. Each $P \in \cH_0$ is then a distribution for $Y$. 

\begin{definition}
A nonnegative statistic $S= s(Y)$ is called an $e$-variable relative to $\mathcal{H}_0$ if 
$$
\text{for all $P \in \mathcal{H}_0$: } \mathbb{E}_P[S] \leq 1.
$$
\end{definition}
Let $\cS_0$ be the set of all e-variables that can be defined relative to $\cH_0$ and such that ${\mathbb E}_P[\log S]$ is well-defined as an extended real number for all $P \in \cH_1$, where we adopt the convention that $\log \infty = \infty$ and $\log 0 = \infty$. `Well-defined'  means that we may have ${\mathbb E}_P[(\log S) \vee 0] = \infty$  or ${\mathbb E}_P[(\log S) \wedge 0] = - \infty$ but not both.
\cite{GrunwaldHK19} defines the $worst$-$case\ optimal\ expected\ capital\ growth\ rate$ $(\grow)$ as
\begin{equation}
    \label{eq:growmax}
\grow
:=
\sup\limits_{S: S \in \mathcal{S}_0} \inf\limits_{P \in \cH_1} \mathbb{E}_{P}[\log S],
\end{equation}
where $\mathbb{E}_{P}[\log S]$ is the so-called $growth\ rate$ of $S$ under $P \in \cH_1$. The $\grow$ $E$-variable, denoted as $\Sgrow$, if it exists, is the e-variable achieving the supremum above. We refer to \cite{GrunwaldHK19,ramdas2023savi}
for extensive discussion on why this is, in a particular sense, the {\em optimal\/} e-variable that can be defined for the given testing problem. 
As a special case of their main result, \citet[Theorem 2] {GrunwaldHK19} show the following:
\begin{theorem}\label{thm:ghk}{\bf  \cite[Theorem 2, Special Case]{GrunwaldHK19}} Suppose that  (a) $D(P_1 \| P_0)< \infty$ for all $P_1 \in\cH_1, P_0 \in \cH_0$ and (b)
\begin{equation}
    \label{eq:growmin}
\min\limits_{P_1 \in \conv(\cH_1)} \min_{P_0 \in \conv(\cH_0)} D(P_1 \| P_0)
\end{equation}
is achieved by some $P_1^*, P_0$, then we have
\begin{align}\label{eq:GHK}
&    \sup_{S \in \cS_0} \inf_{P \in \cH_1} \mathbb{E}_{Y \sim P} \left[\log S \right]
    = D(P^*_1 \| P_0) = \grow = \inf_{P \in \cH_1} \mathbb{E}_{Y \sim P}\left[ 
    \log\frac{p^*_1(Y)}{p_0(Y)}\right],  \\ \label{eq:grow}
    & \text{\ and $\Sgrow$, achieving (\ref{eq:growmax}), is therefore given by\ } \Sgrow= \frac{p_1^*(Y)}{p_0(Y)}.
    \end{align}
Here $p_1^*$ is the density of $P_1^*$, which exists by the finite KL assumption.
\end{theorem}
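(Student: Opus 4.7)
The plan is to sandwich the GROW value between matching upper and lower bounds whose common value is $D(P_1^* \| P_0)$, identifying $\Sgrow$ as the likelihood ratio in the process. I would focus on the simple-null case $\cH_0 = \{P_0\}$ used throughout the rest of the paper; the composite-null case follows the same logic with an additional mixing step on $\cH_0$.

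For the upper bound, fix an arbitrary e-variable $S \in \cS_0$. Assumption (a) supplies a density $p_1$ for every $P_1 \in \cH_1$, and the decomposition $\log S = \log(S p_0/p_1) + \log(p_1/p_0)$ combined with Jensen's inequality on the concave logarithm gives
\[
\mathbb{E}_{P_1}[\log S] \leq \log \mathbb{E}_{P_1}[S p_0/p_1] + D(P_1 \| P_0) = \log \mathbb{E}_{P_0}[S] + D(P_1 \| P_0) \leq D(P_1 \| P_0),
\]
where the final step uses $\mathbb{E}_{P_0}[S] \leq 1$. The same calculation applied to any mixture $\bar P_1 \in \conv(\cH_1)$ yields $\mathbb{E}_{\bar P_1}[\log S] \leq D(\bar P_1 \| P_0)$, and since the infimum of a family is at most any of its weighted averages, $\inf_{P_1 \in \cH_1} \mathbb{E}_{P_1}[\log S] \leq \mathbb{E}_{\bar P_1}[\log S]$. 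Minimizing over $\bar P_1 \in \conv(\cH_1)$ and invoking assumption (b) to pick out the minimizer $P_1^*$ gives $\grow \leq D(P_1^* \| P_0)$.

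For the matching lower bound, take the candidate $S^* := p_1^*(Y)/p_0(Y)$. It is indeed an e-variable because $\mathbb{E}_{P_0}[S^*] = \int p_1^* \, d\rho = 1$. It remains to show $\mathbb{E}_{P_1}[\log S^*] \geq D(P_1^* \| P_0)$ for every $P_1 \in \cH_1$, which is supplied by the Csisz\'ar-Tops{\o}e Pythagorean inequality: the minimizer $P_1^*$ of $D(\cdot \| P_0)$ over the convex set $\conv(\cH_1)$ satisfies
\[
D(P_1 \| P_0) \geq D(P_1 \| P_1^*) + D(P_1^* \| P_0) \quad \text{for all } P_1 \in \conv(\cH_1).
\]
Expanding both sides in terms of $p_0, p_1, p_1^*$ and cancelling the $\mathbb{E}_{P_1}[\log p_1]$ contribution rearranges to $\mathbb{E}_{P_1}[\log(p_1^*/p_0)] \geq D(P_1^* \| P_0)$, which is exactly what is needed. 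Taking the infimum over $\cH_1$ matches the upper bound, so $\grow = D(P_1^* \| P_0)$ and the supremum in (\ref{eq:growmax}) is attained by $\Sgrow = p_1^*/p_0$.

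The main technical obstacle is the Pythagorean inequality itself. I would prove it by considering $P_1^\lambda := (1-\lambda) P_1^* + \lambda P_1$ for $\lambda \in [0,1]$, observing that $\lambda \mapsto D(P_1^\lambda \| P_0)$ is convex and attains its minimum on $[0,1]$ at $\lambda = 0$ by optimality of $P_1^*$, and then extracting the nonnegativity of the right-derivative at zero via dominated convergence. Assumption (a) (finite KL throughout $\cH_1$) and assumption (b) (the minimum is attained, not merely approached) are precisely what is needed to differentiate under the expectation and to avoid the boundary/closure subtleties that arise in the general information-projection theory. Once this inequality is in hand, the rest of the proof is standard Jensen-type bookkeeping.
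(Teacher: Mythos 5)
Your simple-null argument is correct and is essentially the route the paper itself takes: the paper does not reprove Theorem~\ref{thm:ghk} (it cites \cite{GrunwaldHK19}) but its Proposition~\ref{prop:pythagorasminimax} is exactly the simple-$\cH_0$ statement, proved, as you propose, by combining an upper bound on $\sup_S \inf_P {\mathbb E}_P[\log S]$ with the Csisz\'ar--Tops{\o}e Pythagorean inequality for the lower bound. The only cosmetic difference is in the upper bound: the paper represents every $S \in \cS_0$ as $q(Y)/p_0(Y)$ for a sub-probability density $q$ and evaluates the inner infimum at $P_1^*$, while you use the decomposition $\log S = \log(S p_0/p_1) + \log(p_1/p_0)$ plus Jensen and ${\mathbb E}_{P_0}[S]\le 1$ to get ${\mathbb E}_{\bar P_1}[\log S] \le D(\bar P_1 \| P_0)$ for mixtures; both are standard and valid under assumption (a), which guarantees the densities and the finiteness needed for your rearrangement of the Pythagorean inequality. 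Your plan to derive that inequality from the right-derivative at $\lambda=0$ of $\lambda \mapsto D((1-\lambda)P_1^* + \lambda P_1 \| P_0)$ is the classical Csisz\'ar argument and is fine under (a) and (b).

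The genuine gap is the composite-null case, which the theorem as stated covers and which you dismiss as ``an additional mixing step on $\cH_0$.'' There the candidate is $S = p_1^*(Y)/p_0^*(Y)$ with $(P_1^*, P_0^*)$ the joint minimizer over $\conv(\cH_1)\times\conv(\cH_0)$, and the step with no analogue in the simple case is showing that this ratio is an e-variable for \emph{every} $P \in \cH_0$, i.e.\ ${\mathbb E}_P[p_1^*/p_0^*] \le 1$ for all $P \in \cH_0$, not just for $P_0^*$. This is the reverse-information-projection property that carries the real weight of \cite{GrunwaldHK19}'s Theorem 2; it requires its own first-order optimality argument in the \emph{second} argument of the KL divergence (consider $\lambda \mapsto D(P_1^* \| (1-\lambda)P_0^* + \lambda P)$ over $\conv(\cH_0)$ and use that its minimum is at $\lambda = 0$), together with convexity of $\conv(\cH_0)$ and attainment from (b). In addition, your Pythagorean lower bound must then be run with $P_0^*$ in place of $P_0$ (legitimate, since $P_1^*$ minimizes $D(\cdot\|P_0^*)$ over $\conv(\cH_1)$ by joint optimality), and the upper bound must be taken over $\bar P_0 \in \conv(\cH_0)$ as well. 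None of this is hard given your machinery, but it is a distinct idea rather than bookkeeping, and as written your proposal proves only the special case $\cH_0=\{P_0\}$ that the rest of the paper actually uses.
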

\subsection{Simple $\cH_0$ and the Pythagorean Property}
\label{sec:pythagoras}
Most recent work in e-variable theory has concentrated on the case of composite 
$\cH_0$ and simple $\cH_1$ \citep{ramdas2023savi}. Throughout this paper we consider the reverse case, simple $\cH_0 = \{P_0\}$ and composite $\cH_1$.
Now the problem clearly simplifies and in fact, a lot more has been known about this special case since the 1970s, albeit expressed in the different language of data-compression: in a landmark paper, \citet[Theorem 9]{Topsoe79} proved a minimax result for relative entropy  which  (essentially) implies (\ref{eq:GHK}) for the simple $\cH_0$ case. In fact, his result even implies that a distribution $P^*_1$ such that (\ref{eq:GHK}) and  (\ref{eq:grow}) hold exists under much weaker conditions, in particular condition (b) above is not needed: $P_1^*$ exists even if the  minimum in $\min_{P \in \conv(\cH_1)} D(P_1 \| P_0)$ is not achieved. 
The key result that Tops{\o}e used to prove his version of (\ref{eq:GHK}) and (\ref{eq:grow}) is Theorem 8 of his paper, a version of the {\em Pythagorean theorem\/}
for KL divergence originally due to Csisz\'ar \citep{Csiszar75,CoverT91,csiszar_information_2003} . We will re-state this result and explicitly use it to re-derive a version of (\ref{eq:GHK}) and (\ref{eq:grow}) that is slightly stronger than Tops{\o}e's and better suited to our needs (Tops{\o}e's Theorem 9 still assumes condition (a); our derivation weakens it). 

The  Pythagorean theorem expresses that in the following sense, the KL divergence behaves like a squared Euclidean distance: for arbitrary $P_0$ and  $\cH_1$ as above, we have as long as $\cH_1$ is {\em convex} and $\inf_{P_1\in \cH_1} D(P_1 \| P_0) < \infty$, that
there exists a probability distribution $P^*_1$, called the {\em information projection of $P_0$ on $\cH_1$}, that satisfies: 
\begin{align}
    \label{eq:pythagoras}
& \text{for all $P \in \cH_1$:}\  
D(P \| P_0) \geq D(P \| P^*_1) + D(P^*_1 \| P_{0}) \\ \nonumber
&\text{for every $Q_1, Q_2, \ldots \in \cH_1$ 
with $\lim_{j \rightarrow \infty} D(Q_j \| P_0) =
\inf_{P \in \cH_1} D(P_1 \| P_0)
$, we have}: \\& 
\ \ \ \ \lim_{j \rightarrow \infty} D(Q_j \| P^*_1) = 0.
\\ \label{eq:railjet}
& D(P^*_1 \| P_0) \leq \inf_{P \in \cH_1} D(P \| P_0).
\end{align}
In standard cases, the final inequality will hold with equality; in particular we have equality if  $\min_{P_1 \in \cH_1} D(P_1 \| P_0)$ is achieved. 

We call (\ref{eq:pythagoras}) the {\em Pythagorean property}. Note that it is {\em implied\/} by convexity of $\cH_1$ and finiteness of $\inf D(P_1 \| P_0)$, but it may sometimes hold even if $\cH_1$ is not convex. 

We now show, slightly generalizing Tops{\o}e's result, how the Pythagorean property (\ref{eq:pythagoras}) implies a version of 
\cite{GrunwaldHK19}'s theorem for simple $\cH_0$ (in fact, in the reformulation as a minimax theorem for data compression, the Pythagorean property is in fact {\em equivalent\/} to the minimax statement but we will not need that fact here; see \cite[Section 8]{GrunwaldD04} for an extended treatment of this equivalence).

\begin{proposition}\label{prop:pythagorasminimax}{\bf [Pythagoras $\Rightarrow S_{\grow} = p_1^*/p_0$]}
Suppose that $\cH_0 = \{P_0\}$ and let $\cH_1$ be any set of distributions (not necessarily convex!) for $Y$ such that $\inf_{P \in \cH_1} D(P \|  P_0) < \infty$, and suppose that $P_1^*$ is such that (\ref{eq:pythagoras})--(\ref{eq:railjet})  holds, so that it has a density $p^*_1$.  Further assume that, with `well-defined' defined as above (\ref{eq:growmax}), 
\begin{equation}\label{eq:THEcondition}
{\mathbb E}_P[\log p^*_1(Y)/p_0(Y)] \text{\rm \   is well-defined for all $P \in \cH_1$},
\end{equation}
so that $p^*_1(Y)/p_0(Y) \in \cS_0$. Then we have:
\begin{equation}\label{eq:pythagorasminimax}
    \sup_{S \in \cS_0} \inf_{P \in \cH_1} \mathbb{E}_{Y \sim P} \left[\log S \right]
    = D(P^*_1 \| P_0) = \inf_{P \in \cH_1} \mathbb{E}_{Y \sim P}\left[ 
    \log\frac{p^*_1(Y)}{p_0(Y)}\right]
    \end{equation}
so that $S_{\textsc{grow}} = p^*_1(Y)/p_0(Y)$.     
\end{proposition}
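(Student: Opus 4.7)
The plan is to sandwich $\sup_{S\in\cS_0}\inf_{P\in\cH_1}\mathbb{E}_P[\log S]$ between matching upper and lower bounds, both equal to $D(P_1^*\|P_0)$; the first two steps give the lower bound, the third gives the upper bound.

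First I would verify that $S^{*} := p_1^*(Y)/p_0(Y)$ lies in $\cS_0$. The assumption $\inf_{P\in\cH_1} D(P\|P_0) < \infty$ combined with (\ref{eq:railjet}) gives $D(P_1^*\|P_0)<\infty$, hence $P_1^*\ll P_0$ and $\mathbb{E}_{P_0}[S^{*}]=1$; well-definedness of $\mathbb{E}_P[\log S^{*}]$ for every $P\in\cH_1$ is hypothesis (\ref{eq:THEcondition}). This immediately yields
$$
\sup_{S \in \cS_0} \inf_{P \in \cH_1} \mathbb{E}_P[\log S] \;\geq\; \inf_{P \in \cH_1} \mathbb{E}_P\!\left[\log\tfrac{p_1^*}{p_0}\right].
$$

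Second, I would evaluate this infimum exactly using Pythagoras. For any $P\in\cH_1$ with $D(P\|P_0)<\infty$, $P$ has a density $p$ and the algebraic identity $\mathbb{E}_P[\log(p_1^*/p_0)] = D(P\|P_0) - D(P\|P_1^*)$, combined with (\ref{eq:pythagoras}), yields $\mathbb{E}_P[\log p_1^*/p_0] \geq D(P_1^*\|P_0)$; when $D(P\|P_0)=\infty$ the same bound is trivial under the extended-real convention fixed above (\ref{eq:growmax}). Applying the identity along a minimizing sequence $\{Q_j\}\subset\cH_1$ furnished by (\ref{eq:pythagoras}) gives
$$
\mathbb{E}_{Q_j}\!\left[\log\tfrac{p_1^*}{p_0}\right] = D(Q_j\|P_0) - D(Q_j\|P_1^*) \;\longrightarrow\; \inf_{P\in\cH_1} D(P\|P_0),
$$
while combining (\ref{eq:pythagoras}) with (\ref{eq:railjet}) forces $D(P_1^*\|P_0) = \inf_{P\in\cH_1}D(P\|P_0)$. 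Hence the infimum in question equals $D(P_1^*\|P_0)$ exactly.

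Third, for the matching upper bound I would apply Jensen's inequality to an arbitrary $S\in\cS_0$: for any $P\in\cH_1$ with density $p$ and $D(P\|P_0)<\infty$,
$$
\mathbb{E}_P[\log S] = D(P\|P_0) + \mathbb{E}_P\!\left[\log\tfrac{S\,p_0}{p}\right] \leq D(P\|P_0) + \log \mathbb{E}_{P_0}[S] \leq D(P\|P_0),
$$
since $\mathbb{E}_{P_0}[S]\leq 1$. Specialising to the same minimizing sequence $\{Q_j\}$ and sending $j\to\infty$ yields $\inf_{P\in\cH_1}\mathbb{E}_P[\log S] \leq D(P_1^*\|P_0)$, uniformly in $S$; taking the supremum over $\cS_0$ closes the sandwich and delivers both displayed equalities, together with the identification $\Sgrow = p_1^*(Y)/p_0(Y)$.

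The main obstacle I anticipate is bookkeeping around the fact that $P_1^*$ need not itself belong to $\cH_1$ (this is precisely what Tops{\o}e's weakening of condition (b) in Theorem~\ref{thm:ghk} buys us): the key identities above must be argued via the limiting sequence $\{Q_j\}$ rather than at $P_1^*$ directly, and one must simultaneously keep track of $P\in\cH_1$ with $D(P\|P_0)=\infty$ under the extended-real convention so that every infimum, limit and inequality remains well-defined in each step.
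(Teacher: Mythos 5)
Your overall architecture is the same as the paper's: sandwich the minimax value around $D(P_1^*\|P_0)$, obtaining the lower bound by rearranging the Pythagorean inequality (\ref{eq:pythagoras}) into the key bound $\mathbb{E}_P[\log p_1^*(Y)/p_0(Y)]\ge D(P_1^*\|P_0)$ (the paper's (\ref{eq:obb})) and checking that $p_1^*/p_0\in\cS_0$. Your upper bound runs somewhat differently, and in one respect more carefully: instead of reducing every $S\in\cS_0$ to a sub-probability density $q/p_0$ and bounding the inner infimum by its value at $P_1^*$ (the paper's step (\ref{eq:leftbound}), which taken literally presupposes $P_1^*\in\cH_1$, even though the information projection need not lie in $\cH_1$), you use the universal Jensen bound $\mathbb{E}_P[\log S]\le D(P\|P_0)$ and pass along the minimizing sequence $Q_j\in\cH_1$ with $D(Q_j\|P_0)\to\inf_{P\in\cH_1}D(P\|P_0)=D(P_1^*\|P_0)$. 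That route only ever evaluates at members of $\cH_1$ and is sound; it is a legitimate (mild) improvement in bookkeeping over the paper's own display.

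The one genuine gap is your treatment of $P\in\cH_1$ with $D(P\|P_0)=\infty$ in the lower bound, which you declare ``trivial under the extended-real convention.'' It is not: what must be shown for such $P$ is $\mathbb{E}_P[\log p_1^*(Y)/p_0(Y)]\ge D(P_1^*\|P_0)$, and this does not follow from $D(P\|P_0)=\infty$. If $D(P\|P_1^*)<\infty$ the decomposition indeed gives $+\infty$ and the case is immediate, but if $D(P\|P_1^*)=\infty$ as well, the identity $\mathbb{E}_P[\log p_1^*/p_0]=D(P\|P_0)-D(P\|P_1^*)$ degenerates to $\infty-\infty$ and the Pythagorean inequality at $P$ is vacuous; a priori, well-definedness (\ref{eq:THEcondition}) still permits $\mathbb{E}_P[\log p_1^*(Y)/p_0(Y)]=-\infty$, which would drive the infimum below $D(P_1^*\|P_0)$ and break the second equality in (\ref{eq:pythagorasminimax}). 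This is precisely the subcase for which the paper invokes (\ref{eq:THEcondition}) together with the finiteness of $D(P_1^*\|P_0)$ from (\ref{eq:railjet}) and argues the rearrangement separately in the lines following (\ref{eq:obb}); your proof needs an explicit argument at this point rather than an appeal to triviality.
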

\begin{proof}
Since we deal with a simple null, it holds that (as shown by \cite{GrunwaldHK19}) any $S \in \cS_0$ must be of the form $q(Y)/p_0(Y)$ for some sub-probability density $q$ relative to the measure $\rho$, and any such ratio defines an e-variable: the notions are equivalent. Here `sub-probability' means that $\int q(y) d \rho(y) \leq 1$ is allowed to be smaller than $1$.
We thus have, with $\sup_q$ denoting the supremum of all sub-probability density functions $q$ for $Y$, 
    \begin{align}
        \sup_{S \in \cS_0} \inf_{P \in \cH_1} \mathbb{E}_{Y \sim P} \left[\log S \right]
        & = \sup_{q} \inf_{P \in \cH_1} \mathbb{E}_{Y \sim P} \left[\log \frac{q(Y)} {p_0(Y)} \right] \nonumber \\ \label{eq:leftbound}
        & \leq \sup_q \mathbb{E}_{Y \sim P^*_1} \left[\log \frac{q(Y)} {p_0(Y)} \right]  = D(P^*_1 \| P_0).
    \end{align}
    At the same time, the Pythagorean inequality (\ref{eq:pythagoras}) gives, by simple re-arranging of the logarithmic terms, that for all $P \in \cH_1$ with $D(P \| P_0) < \infty$: 
    \begin{align}\label{eq:obb}
            \mathbb{E}_{Y \sim P}\left[ 
    \log \frac{p^*_1(Y)}{p_0(Y)} 
    \right] \geq \mathbb{E}_{Y \sim P^*_1}\left[ 
    \log \frac{p^*_1(Y)}{p_0(Y)} 
    \right] = D(P^*_1 \| P_0),
    \end{align}
    whereas if $D(P \| P_0) = \infty$ then by assumption (\ref{eq:railjet}), the right-hand side of (\ref{eq:obb}) is finite. (\ref{eq:pythagoras}) then implies $D(P \| P_1) = \infty$, and. because by assumption (\ref{eq:THEcondition}), the left-hand side of (\ref{eq:obb}) is well-defined, we can again re-arrange (\ref{eq:pythagoras}) to give (\ref{eq:obb}). Thus, we have shown that for all $P \in \cH_1$, (\ref{eq:obb}) holds. But then
     \begin{align}\label{eq:rightbound}
        \sup_{S \in \cS_0} \inf_{P \in \cH_1} \mathbb{E}_{Y \sim P} \left[\log S \right]
        & \geq  \inf_{P \in \cH_1} \mathbb{E}_{Y \sim P} \left[\log \frac{p^*_1(Y)} {p_0(Y)} \right] \geq D(P^*_1 \| P_0). 
    \end{align}
    Together, (\ref{eq:leftbound}) and (\ref{eq:rightbound}) imply the result.
  \end{proof}
While Condition (\ref{eq:THEcondition}) may look complicated, it is immediately verified to hold if $D(P_1 \| P_0) < \infty$ for all $P_1 \in \cH_1$ but also under Condition ALT-$\cH_1$ presented in the next section, which allows for $\cH_1$ to even contain distributions $P_1$ with $P_1 \not \ll P_0$ (see Example~\ref{ex:welldefined} for an instance of this). 

\subsection{The R\^ole of Exponential Families}
We shall  from now on tacitly assume that the convex support of $P_0$ is $d$-dimensional (see \cite[Chapter 1]{Brown86} for the precise definition of `convex support'). This is without loss of generality: if $Y$ takes values in $\reals^d$ yet the convex support does not have dimension $d$,  it must have dimension $d' < d$, and then we can replace $Y$ by $d'$-dimensional $Y'$ that is an affine function of $Y$ and work with $Y'$ instead.  
Combined with our earlier assumption that $Y$ has a moment generating function under $P_0$, it follows \citep[Chapter 1]{Brown86} that 
$Y$ and $P_0$ jointly {\em generate\/} a $d$-dimensional natural exponential family $\cE = \{P_{\vec{\theta}}: \vec{\theta} \in \Theta \}$: a set of distributions for $Y$  with parameter space $\Theta \subseteq {\mathbb R}^d$ . Each distribution $P_{\theta}$ has density $p_{\theta}$ relative to $\rho$, given by:
\begin{equation}\label{eq:expfam}
p_{\theta}(Y) = \frac{1}{Z(\theta)} \exp\left({\theta}^\top Y  \right) \cdot p_0(Y),
\end{equation}
where 
$Z(\theta)$ is the normalizing factor and $p_0$ is the density of the {\em generating distribution} $P_0$ and $\Theta = 
\{\theta: Z(\theta) < \infty\}$. 
From now on, we freely use standard properties, terminology and definitions concerning exponential families (such as  `carrier density' and so on), that can be found, in, for example, \cite{BarndorffNielsen78,efron_2022,Brown86}. We will only mention these works, and then specific sections therein, when we refer to results that are otherwise hard to find.

Parameterization (\ref{eq:expfam}) is called the {\em canonical\/} or {\em natural\/} parameterization. As is well-known, exponential families can be re-parameterized in terms of the mean of $Y$. Thus, there is a 1-to-1 mapping $\mu: \Theta \rightarrow \meanspace$, mapping each $\theta \in \Theta$ to $\mu(\theta) := {\mathbb E}_{P_{\theta}}[Y]$, with $\meanspace$ being the {\em mean-value parameter space}. We let $\theta(\mu)$ be the inverse of this mapping and let
$\Pmv_{\mu} := P_{\theta(\mu)}$ with density $\pmv_{\mu}(Y) := p_{\theta(\mu)}(Y)$. Then we can equivalently write our exponential family as 
\begin{equation}\label{eq:expfammean}
\cE = \{\Pmv_\mu: \mu \in \meanspace \}. 
\end{equation}
Without loss of generality, we assume that $Y$ is defined such that $\nv \in \meanspace$ and the natural 
parameterization is such that $\theta(\nv) = \nv, \mu(\nv) = 0$. 
Clearly the null hypothesis $\cH_0 = \{P_0\}$ is given by the element of $\cE$ corresponding to the parameter vector $\theta = \nv$.
The mean-value parameterization will be the most `natural' one (no pun intended) to use to define the alternative. 
As said in the introduction, we restrict ourselves to cases in which $Y$ has a moment generating function under $P_0$. Then $\meanspace$ contains an open set around $0$ and the exponential family (\ref{eq:expfammean}) exists.
We define the alternative $\cH_1$ in terms of a given set of means $\meanspace_1 \subset \conv(\cY)$, invariably satisfying:

\paragraph{Condition ALT-$\meanspace_1$:}
(a) $\meanspace_1$ is closed, and $\inf_{\mu \in \meanspace_1} \| \mu \|_2 > 0$; and, (b), for all $\mu \in \meanspace_1$, there is a $\mu' \in \meanspace \cap \meanspace_1$ that lies on the straight line connecting $\nv$ and $\mu$. \\ \ \\ \noindent
For the actual alternative hypothesis we then invariably further assume:
\paragraph{Condition ALT-$\cH_1$:}
(a) $\cH_1$ and $\meanspace_1$ are {\em compatible\/} in the sense of (\ref{eq:compatible}), and (b) $\cE_1 := \{ \Pmv_{\mu}: \mu \in \meanspace_1 \cap \meanspace\} \subset \cH_1$ and, (c), for all $P \in \cH_1$, ${\bf E}_{Y \sim P}[Y]$ is well-defined. \\ \ \\ \noindent
To appreciate these conditions, consider first the case that $\meanspace= \conv(\cY)$, i.e. the mean-value parameter space of family $\cE$ contains every possible mean. Then Condition ALT-$\meanspace_1$ (a)  says that $\meanspace_1$ is separated from $0$ and that it contains its boundary (note that it does not need to be bounded: for example, in the case that $\cE$ is the 1-dimensional normal location family, having $\meanspace_1= [1,\infty)$ is perfectly fine). Condition ALT-$\meanspace_1$ (b) holds automatically if $\meanspace= \conv(\cY)$ (e.g. in the Gaussian location case); the example below illustrates the case that $\meanspace$ is a strict subset of $\conv(\cY)$. Condition ALT-$\cH_1$ (b) simply says that for every mean in $\meanspace_1$, $\cH_1$ contains the element of the exponential family $\cE$ with that mean. 
\begin{example}\label{ex:ample} {\rm 
As a very simple example, suppose that $Z_1, Z_2, \ldots$ are i.i.d. Bernoulli$(p)$ for some $0 < p < 1$, and $X_i = 1/p$ if $Z_i=1$ whereas $X_i = - 1/(1-p)$ if $Z_i=0$. Let  $Y=n^{-1} \sum_{i=1}^n X_i$. Then  $\conv(\cY) = [-1/(1-p),1/p]$ and according to $P_0$, $Y$ is a linear transform of a $\textsc{bin}(n,p)$ random variable with ${\mathbb E}_{P_0}[Y]= 0$. Then Condition ALT-$\meanspace_1$ (a) expresses that $\meanspace_1$ must not contain a neighborhood of  the mean of $P_0$ (i.e., $\mu=0$), and (b) that it must not be restricted to singletons at the boundary (i.e. $\mu = 1/p$ or $\mu= -1/(1-p)$). However, for example, $\meanspace_1=[1/p-\epsilon,1/p]$ for any $0 < \epsilon < 1/p$ satisfies Condition ALT-$\meanspace_1$. 
We see that the condition only very minimally restricts the set of $\meanspace_1$ that are allowed. It becomes more restrictive for the (very seldomly encountered!) case that the generated exponential family $\cE$ is {\em irregular\/} \citep{BarndorffNielsen78}. By construction of $\cE$, this is equivalent to it being {\em not steep}. For example, let $Y$ be 1-dimensional and let  $P_0$ have density $p_0(y) = {\bf 1}_{y > 1} \cdot (2/y^3)$ relative to Lebesgue measure. Then we get  $\cE = \{P_{\theta} \mid \theta \leq 0\}$ with $p_{\theta}(y) \propto \exp(\theta  y)p_0(y)$. Then $\meanspace= (1,2]$ yet $\cY= (1,\infty)$ (from the fact that $\meanspace$ is not open we immediately see that $\cE$ is not regular). Condition ALT-$\meanspace_1$ now requires that $\meanspace_1$ contains a $\mu < 2$, even though $P_0(Y \geq b)> 0$ for any $b \geq 2$. }
\end{example}
\commentout{
\paragraph{Condition 1}
We assume that:
\begin{enumerate}
    \item $Y$ has a moment generating function under $P_0$, so $P_0$ and $Y$ generate an exponential family $\cE = \{\Pmv_{\mu}: \mu \in \meanspace\}$. 
    \item We assume that this family is {\em regular}, so (by \cite[Theorem 9.2]{BarndorffNielsen78}), 
    $\meanspace$ is convex and equal to the interior of the convex hull of the support of $Y$ under $P_0$. 
    \item We assume that $\meanspace_1$ is contained in the interior of the convex hull of the support of $Y$ under $P_0$.
\end{enumerate}
The strong condition here is that a moment generating function exists, which implies that $Y$ has exponentially small tails. Once this is the case, in `most' cases the family generated by $P_0$ and $Y$ is steep, and therefore regular; see \cite{BarndorffNielsen78} who provides lots of examples and discussion. 
}

\section{Convex $\meanspace_1$}
\label{sec:convexm1}
\paragraph{The Connection between Exponential Families and the Pythagorean Theorem}
Although exponential families are usually employed as families that are reasonable in their own right, as is well-known \citep{Csiszar75,GrunwaldD04} they can also be arrived at as characterizing  the information projection $P^*_1$ in the Pythagorean property above for certain $\cH_1$. We will heavily use this characterization below. 
Variations of the following result (see Figure~\ref{fig:enter-label} for illustration) are well-known:
\begin{proposition}\label{prop:exp}{\bf [GROW e-variable is $\bar{p}_{\mu^*}/p_0$, with $\bar{P}_{\mu^*}$ an element of exponential family $\cE$
]}
    Let $\cH_0 =\{P_0\}$ and $\meanspace_1$ be such that Condition ALT-$\meanspace_1$  holds. Furthermore let $\meanspace_1$ be convex. Then there exists $\mu^* \in \meanspace \setminus \{0 \}$  uniquely achieving $\inf_{\mu \in \meanspace_1} D(\Pmv_{\mu} \| P_0)$, and we have: 
\begin{equation}\label{eq:rightmin}
\min_{\mu \in \meanspace_1 \cap \meanspace} D(\Pmv_{\mu} \| P_0) = 
\min_{\mu \in \meanspace_1} D(\Pmv_{\mu} \| P_0) = 
D(\Pmv_{\mu^*} \| P_0)  
        =         \theta^{*\top} \mu^* - \log Z(\theta^*) 
\end{equation}
with $\theta^* := \theta(\mu^*) \in \Theta$.
    Furthermore let $\cH_1$ be convex and such that Condition ALT-$\cH_1$ holds. Then the minimum in (\ref{eq:rightmin}) further satisfies: 
    \begin{align}\label{eq:eu}
         \inf_{P \in \cH_1} D(P \| P_0) 
        = D(\Pmv_{\mu^*} \| P_0)  
    \end{align}
    the minimum KL on the left being achieved uniquely by $\Pmv_{\mu^*}$. 
    As a consequence, 
$\Sgrow = \pmv_{\mu^*}(Y)/\pmv_0(Y) \in \cS_0$
and $\grow = D(\Pmv_{\mu^*} \| P_0)$.
 \end{proposition}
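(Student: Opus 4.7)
The plan is to identify the information projection $P_1^*$ of Proposition~\ref{prop:pythagorasminimax} explicitly as the exponential-family element $\Pmv_{\mu^*}$, and then invoke that proposition to conclude. First, using the exponential-family density (\ref{eq:expfam}), for any $\mu \in \meanspace$ we have the closed form $D(\Pmv_\mu \| P_0) = \theta(\mu)^\top \mu - \log Z(\theta(\mu))$, which is strictly convex in $\mu$ (as the Legendre dual of the strictly convex cumulant $\log Z$). On the nonempty (by Condition ALT-$\meanspace_1$(b)) convex set $\meanspace_1 \cap \meanspace$, strict convexity together with closedness of $\meanspace_1$, separation from $\nv$, and coercivity of the KL at the boundary of $\meanspace$ yields existence and uniqueness of a minimizer $\mu^* \in (\meanspace_1 \cap \meanspace) \setminus \{\nv\}$. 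Under the convention $D(\Pmv_\mu \| P_0) = +\infty$ for $\mu \notin \meanspace$, the two infima in (\ref{eq:rightmin}) agree, and evaluation at $\mu^*$ gives the displayed closed form.

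Next I would derive a Pythagorean identity directly from the exponential-family structure. Since $\log(\pmv_{\mu^*}(Y)/p_0(Y)) = \theta^{*\top} Y - \log Z(\theta^*)$ is affine in $Y$, for every $P \in \cH_1$ with mean $\mu_P \in \meanspace_1$ one obtains
\begin{equation*}
D(P \| P_0) - D(P \| \Pmv_{\mu^*}) = \theta^{*\top} \mu_P - \log Z(\theta^*),
\end{equation*}
valid also when $P \not\ll P_0$ (both sides then equal $+\infty$). Substituting $P = \Pmv_{\mu^*}$ recovers $D(\Pmv_{\mu^*} \| P_0) = \theta^{*\top} \mu^* - \log Z(\theta^*)$, so subtracting gives the decomposition
\begin{equation*}
D(P \| P_0) = D(P \| \Pmv_{\mu^*}) + D(\Pmv_{\mu^*} \| P_0) + \theta^{*\top}(\mu_P - \mu^*).
\end{equation*}

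To upgrade this to (\ref{eq:pythagoras}), I need $\theta^{*\top}(\mu_P - \mu^*) \geq 0$. Since the gradient of $\mu \mapsto \theta(\mu)^\top \mu - \log Z(\theta(\mu))$ at $\mu^*$ is $\theta^*$, first-order optimality of $\mu^*$ on the convex set $\meanspace_1 \cap \meanspace$ gives $\theta^{*\top}(\mu' - \mu^*) \geq 0$ for all $\mu' \in \meanspace_1 \cap \meanspace$. To extend this to every $\mu_P \in \meanspace_1$, I would use Condition ALT-$\meanspace_1$(b): a point $\mu' = \alpha \mu_P$ with $\alpha \in (0,1]$ lies in $\meanspace \cap \meanspace_1$, so $\theta^{*\top}\mu' \geq \theta^{*\top}\mu^*$, and combining with the auxiliary fact $\theta^{*\top}\mu^* \geq 0$ (which follows from $\log Z(\theta^*) \geq \log Z(\nv)=0$ and $D(\Pmv_{\mu^*}\|P_0)\geq 0$) yields $\theta^{*\top}\mu_P \geq \theta^{*\top}\mu^*$. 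Thus (\ref{eq:pythagoras}) holds for $P_1^* := \Pmv_{\mu^*}$; (\ref{eq:railjet}) is then automatic since $\Pmv_{\mu^*} \in \cE_1 \subset \cH_1$; uniqueness of the achiever in $\cH_1$ comes from $D(P \| \Pmv_{\mu^*}) = 0 \Rightarrow P = \Pmv_{\mu^*}$; and condition (\ref{eq:THEcondition}) is immediate because $\mathbb{E}_P[\theta^{*\top} Y - \log Z(\theta^*)]$ is well-defined for every $P \in \cH_1$ by Condition ALT-$\cH_1$(c). Proposition~\ref{prop:pythagorasminimax} then delivers $\Sgrow = \pmv_{\mu^*}/p_0 \in \cS_0$ and $\grow = D(\Pmv_{\mu^*} \| P_0)$. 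The main obstacle is precisely the variational-inequality extension in this last step, since it couples the KKT condition on $\meanspace_1 \cap \meanspace$ with the auxiliary sign control on $\theta^{*\top}\mu^*$ and with the geometric hypothesis ALT-$\meanspace_1$(b) that was tailored to handle non-regular families.
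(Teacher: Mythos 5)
Your proposal is correct in its main thrust, but it takes a genuinely different route from the paper at the two decisive steps. The paper never verifies the Pythagorean property for $\Pmv_{\mu^*}$ directly: to get (\ref{eq:eu}) it lower-bounds $D(P\|P_0)$ by $D(\Pmv_{\mu_P}\|P_0)$ for the exponential-family member with the \emph{same} mean as $P$ (a pure nonnegativity-of-KL step, see (\ref{eq:sport})), then invokes the definition of the minimum over $\meanspace_1\cap\meanspace$, treating means $\mu_P\in\meanspace_1\setminus\meanspace$ by a linearity argument along the ray through $\mu_P$ supplied by Condition ALT-$\meanspace_1$(b); for the final GROW claim it appeals to Tops{\o}e's existence theorem for the information projection (this is the only place convexity of $\cH_1$ is used) and then applies Proposition~\ref{prop:pythagorasminimax}. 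You instead verify (\ref{eq:pythagoras})--(\ref{eq:railjet}) explicitly for $P_1^*=\Pmv_{\mu^*}$ via the affine decomposition $D(P\|P_0)=D(P\|\Pmv_{\mu^*})+D(\Pmv_{\mu^*}\|P_0)+\theta^{*\top}(\mu_P-\mu^*)$ together with the variational inequality $\theta^{*\top}(\mu_P-\mu^*)\geq 0$; your scaling argument for $\mu_P\notin\meanspace$, using ALT-$\meanspace_1$(b) and $\theta^{*\top}\mu^*\geq \log Z(\theta^*)\geq 0$, is correct and closely parallels the paper's ray argument. What your route buys: it bypasses Tops{\o}e's theorem, nowhere uses convexity of $\cH_1$ (consistent with the paper's remark that the Pythagorean property may hold without it), and the unnumbered sequence condition between (\ref{eq:pythagoras}) and (\ref{eq:railjet}) falls out of the same decomposition --- you should state that explicitly, since Proposition~\ref{prop:pythagorasminimax} formally assumes it.

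There are, however, two soft spots where your argument needs more than you acknowledge, precisely in the irregular (non-steep) case that Condition ALT-$\meanspace_1$(b) was introduced to handle (Example~\ref{ex:ample}). First, your existence proof for $\mu^*$ invokes ``coercivity of the KL at the boundary of $\meanspace$''; this is exactly what fails for non-steep families, where $D(\Pmv_{\mu}\|P_0)$ stays bounded as $\mu$ approaches the non-attainable part of $\conv(\cY)$. The paper instead locates the minimizer on the boundary of $\meanspace_1$ by strict convexity and uses ALT-$\meanspace_1$(b) to place it in $\meanspace$. Second, the first-order optimality condition $\theta^{*\top}(\mu'-\mu^*)\geq 0$ for all $\mu'\in\meanspace_1\cap\meanspace$ requires (a) that the segment from $\mu^*$ to $\mu'$ stay in the feasible set, i.e.\ convexity of $\meanspace_1\cap\meanspace$, which is clear when $\cE$ is regular ($\meanspace$ open and convex) but not guaranteed in general for $d>1$ non-steep families, and (b) differentiability of $\mu\mapsto D(\Pmv_{\mu}\|P_0)$ at $\mu^*$ with gradient $\theta^*$; if $\mu^*$ sits at a non-steep boundary point, $\theta^*$ is only a subgradient, and the subgradient inequality runs in the wrong direction for your purpose. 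For regular families your argument is complete; to cover the full generality claimed by the proposition you should either add these regularity caveats or, at this step, switch to the paper's same-mean comparison, which avoids gradients and feasibility of segments altogether.
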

 \begin{proof}
The KL divergence $D(\bar{P}_{\mu} \| P_0)$ is continuous in $\mu$, has its overall minimum over $\conv(\cY)$ in the point $\mu =0$ and is strictly convex. This implies (i) that by Condition ALT-$\meanspace_1$(a), $\min_{\mu \in \meanspace_1} D(\bar{P}_{\alpha\mu} \| P_0)$ is uniquely achieved for some $\mu^*$ on the boundary of $\meanspace_1$. By Condition ALT-$\meanspace_1$(b), the boundary of $\meanspace_1$ is included in $\meanspace$, so $\mu^* \in \meanspace_1 \cap \meanspace$. 
This yields the first two equations in (\ref{eq:rightmin}). Writing out the densities in $D(\Pmv_{\mu^*} \| \Pmv_0)$ then gives the rightmost equality.

It remains to prove (\ref{eq:eu}).
Condition ALT-$\cH_1$ implies that $\cH_1$ contains a $\bar{P}_{\mu} \in \cE$, hence $\meanspace_1$ contains a $\mu \in \meanspace$,  and since $D(\bar{P}_{\mu} \|P_0) < \infty$ for all $\mu \in \meanspace$, we have  $\inf_{P \in \cH_1} D(P \| P_0) < \infty$. Therefore, it suffices to show (\ref{eq:eu}) with the infimum taken over $\{P \in \cH_1: D(P \| P_0) < \infty \}$. In particular all $P$ in this set have a density $p$.  
Thus, fix any $P$ in the set $\{P \in \cH_1: D(P \| P_0) < \infty \}$ and let $\mu = \mathbb{E}_{P}[Y]$.
We first consider the case that $\mu \in \meanspace$, so that $\Pmv_{\mu} \in \cE$ (in particular then also $\mu \in \meanspace \cap \meanspace_1$ and $\Pmv_{\mu} = P_{\theta}$ with $\theta= \theta(\mu)$; note though that we may have $P \neq \bar{P}_{\mu}$). 
Straightforward rewriting and linearity of expectation gives 
\begin{align}\label{eq:sport}
&     D(P \| P_0)= \mathbb{E}_{Y \sim P}\left[
\log \frac{p(Y)}{p_0(Y)}
    \right] \geq \mathbb{E}_{P}\left[
\log \frac{p_{\theta}(Y)}{p_0(Y)}
    \right] = 
     \mathbb{E}_{P}\left[
\log \frac{1}{Z(\theta)} \cdot e^{\theta^{\top} Y } 
    \right]=  \nonumber \\ & 
   \theta^{\top} \mu - \log Z(\theta) = D(P_{\theta} \| P_0)= D(\Pmv_{\mu} \| P_0)  
 \geq \min_{\mu \in \meanspace_1 \cap \meanspace} D(\Pmv_{\mu} \| P_0),
\end{align}
the final inequality following because $\mu \in \meanspace \cap \meanspace_1$. Together with (\ref{eq:rightmin}) this shows (\ref{eq:eu}) for the case that $\mu \in \meanspace$.
It remains to consider the case that  $\mu \not \in \meanspace$. In that case, Condition ALT-$\meanspace_1$ and ALT-$\cH_1$ imply that there exists a $\mu' \in \meanspace \cap \meanspace_1$ and $\Pmv_{\mu'} \in \cH_1 \cap \cE$ such that $\mu' = \alpha \mu$ for some $0 < \alpha < 1$. Retracing the steps of (\ref{eq:sport}) with $\theta' = \theta(\mu')$ in the place of $\mu$, we find 
\begin{equation}\label{eq:doei}
 D(P \| P_0) \geq 
   \theta^{'\top} \mu - \log Z(\theta') = f(1)
\end{equation}
where, for $\gamma \in [0,1]$, we set $f(\gamma) = {\mathbb E}_{P_{\gamma \mu}}\left[\log \frac{\pmv_{\mu'}(Y)}{p_0(Y)} \right]$.
Since $f(0)$ is minus a KL divergence, $f(0) < 0$. Also, $f(\alpha \mu) = f(\mu') > 0$, since $f(\mu')$ is a KL divergence. Since $f(\gamma)$ is linear in $\gamma$, it follows that $f(\gamma)$ is strictly increasing so $f(1) > f(\mu')$ and then (\ref{eq:doei}) gives that $D(P \|P_0) \geq D(\Pmv_{\mu} \| P_0)$ which again implies the result. 

It remains to prove that $S_{\grow}  := S$ with $S= \bar{p}_{\mu^*}(Y)/p_0(Y)$. For this, first  note that 
for all $P \in \cH_1$, we have $P(p_0(Y) > 0)=1$ by definition (namely, $\cY$ is the support of $Y$ under $P_0$), from which it follows that $P(S > 0) =1$,
so that $\log S= \theta^{*\top} Y - \log Z(\theta^*)$ whence ${\bf E}_{Y \sim P}[\log S]$ is well-defined; it follows that 
$S\in \cS_0$. By Tops{\o}e's result and the assumed convexity of $\cH_1$ and finiteness of $D(\bar{P}_{\mu^*} \| P_0)$, we may now apply  Proposition~\ref{prop:pythagorasminimax}, and the result follows. 
\end{proof}
 \begin{example}{\rm 
    \label{ex:welldefined}
Since $\cE$ is an exponential family, we know that all elements $P \in \cE$ have the same support as $P_0 \in \cE$, and, by definition of $P_0$, this support is equal to $\cY$. This implies that, even if  $P \in \cH_1$ puts positive mass on an outcome $y \in \cY$ that has mass $0$ under $P_0$,   then (because $y$ must be in $P_0$'s support), well-definedness (\ref{eq:THEcondition}) may still hold. 
For example, consider the case that  $Y = \reals$ and $P(\{0 \}) = 1/2$ and $P \mid Y \neq 0 = N(0,1)$ is a standard normal, and $\cE$ is the normal location family so that $\bar{P}_{\mu} = N(\mu,1)$. We get ${\mathbb E}_P[\log \bar{p}_{\mu}(Y)/p_0(Y)]= (1/2) D(P_0 \| \bar{P}_{\mu})+ \mu^2/2$, i.e. it is well-defined. On the other hand, if we were to allow $P_0$ defined on a sample space $\cY$ with $Y$'s support under $P_0$ a strict subset of $\cY$, and we would take  $P \in \cH_1$ that put positive mass on an outcome that is not in the support of $P_0$, then $\bar{p}_{\mu}(Y)/p_0(Y)$ would evaluate to $0/0$ with positive $P$-probability, and $\Sgrow$ of the form above would be undefined. We avoid such issues by requiring $\cY$ to coincide with its support under $P_0$.  We suspect that  using the ideas of \cite{larsson2024numeraire}, we can even obtain well-defined growth expressions for this case, but will leave this for future work. }
\end{example}

\subsection{CSC (Chernoff-Sanov-Csisz\'ar) for convex $\meanspace_1$}
\label{sec:cscconvex}
Note that the only role $\cH_1$ plays in the theorem below is to make $\Sgrow$ well-defined; the bounds further do not depend on the specific choice of $\cH_1$ as long as $\Sgrow = \pmv_{\mu^*}(Y)/p_0(Y)$.

\begin{figure}
    \centering
    \includegraphics[width=0.5\linewidth]{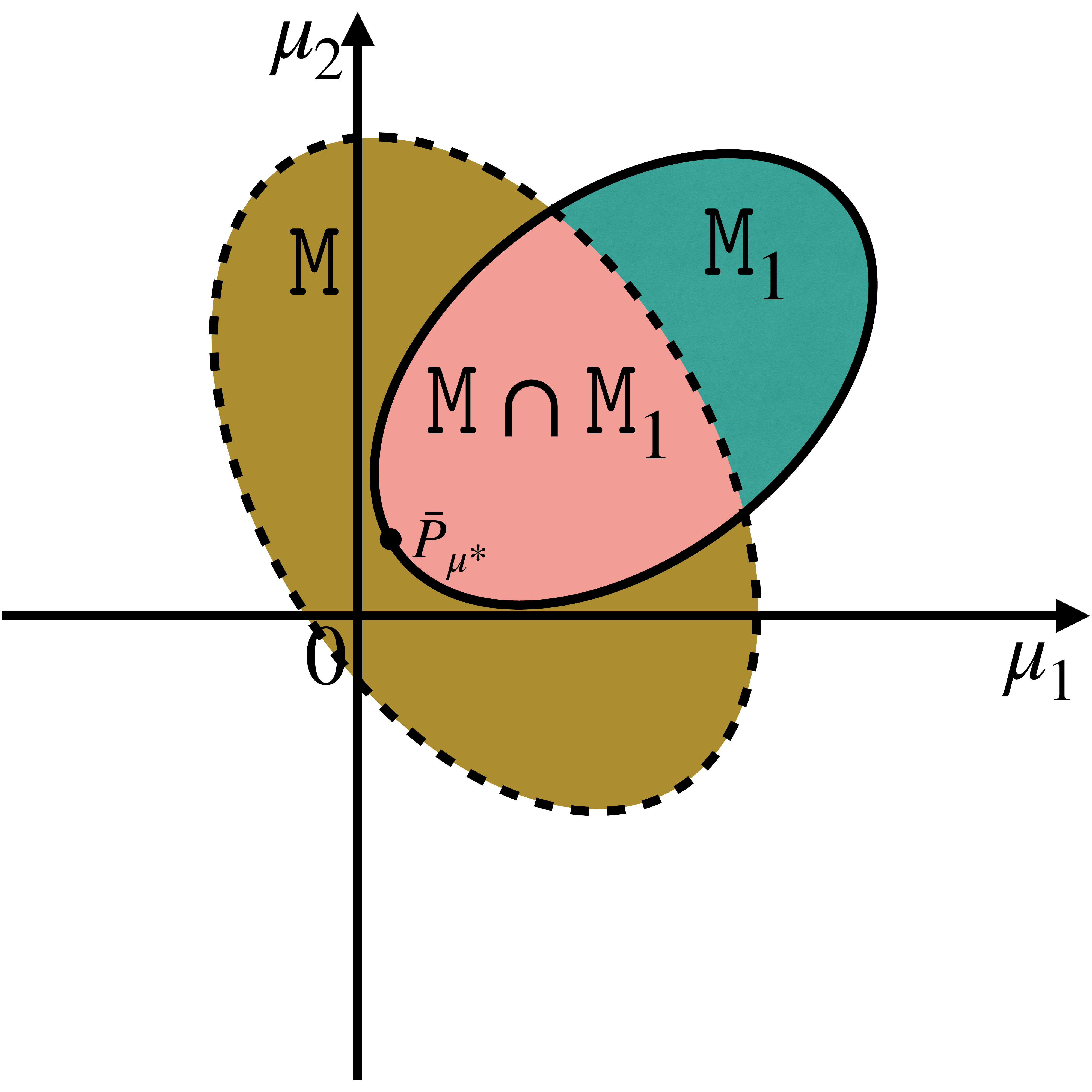}
    \caption{Convex $\mathtt{M}_1$. $\mathtt{M}_1$ is the mean parameter set that  $\cH_1$ is compatible with, and $\mathtt{M}$ is the mean parameter space of the exponential family generated from $P_0$. The setting of this  figure satisfies Condition ALT-$\meanspace_1$.}
    \label{fig:enter-label}
\end{figure}

\begin{theorem}\label{thm:old} Suppose $P_0$ and $\meanspace_1$ are such that $\meanspace_1$ is convex and Condition ALT-$\meanspace_1$ holds so that, by Proposition~\ref{prop:exp}, there exists $\mu^* \in \meanspace$ minimizing $D(\bar{P}_{\mu} \| P_0)$ over $\meanspace_1$ with $\Pmv_{\mu^*} \in \cE$. Let $\cH_1$ be any set of distributions such that Condition ALT-$\cH_1$ holds, so that, by Proposition~\ref{prop:exp},  $\Sgrow(Y) := \pmv_{\mu^*}(Y)/p_0(Y)$. Define 
\begin{align}
    \underline{D}:= \inf_{\mu \in \meanspace_{1}}  D(\Pmv_{\mu} \| P_0). 
\end{align}
We have:
\begin{equation}\label{eq:subset}
y \in \meanspace_{1} \Rightarrow \Sgrow(y) \geq  e^{\underline{D}}
\end{equation}
so that
$$
{\mathbb E}_{Y \sim \Pmu_0}\left[ {\bf 1}_{Y \in \meanspace_{1}}
\cdot {e^{\underline{D}} } \right]
\leq {\mathbb E}_{Y \sim \Pmu_0}\left[ {\bf 1}_{Y \in \meanspace_{1}}
\cdot {S_{\grow}} \right] \leq 
{\mathbb E}_{Y \sim \Pmu_0}\left[ {S_{\grow}} \right] 
\leq 1.
$$
As a consequence,  we have:
\begin{align}\label{eq:basiccsc}
 P_0(Y \in  \meanspace_{1}) 
 = {\mathbb E}_{Y \sim \Pmu_0}\left[ {\bf 1}_{Y \in \meanspace_{1}}
\cdot {e^{\underline{D}} } \right] \cdot e^{- \underline{D}}
 \leq   e^{-  \underline{D}},
\end{align}
and we also have, for the one-dimensional case with $Y \in \reals$, for any $D>0$ for which there exists $\mu^* \in \meanspace$ such that $D(\Pmv_{\mu^*} \| P_0) =  D$,  with $s = \sgn(\mu^*)$,
\begin{align}
\label{eq:MLEbound}
& P_0\left( \frac{\sup_{\mu \in \meanspace} \bar{p}_{\mu}(Y)}{p_0(Y)} \geq e^{D}, \textsc{sgn}(Y) = s \right) 
= P_0\left( \frac{\bar{p}_{\mu^*}(Y)}{p_0(Y)} \geq e^{D} \right)
\leq   e^{- D}.
\end{align}
\end{theorem}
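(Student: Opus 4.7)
The plan is to organize the proof around the pointwise bound (\ref{eq:subset}): once that is in hand, the displayed chain of inequalities and the CSC bound (\ref{eq:basiccsc}) are immediate, and the one-dimensional MLE bound (\ref{eq:MLEbound}) follows by specializing (\ref{eq:basiccsc}) to a suitable half-line.

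To establish (\ref{eq:subset}), I would first rewrite the GROW e-variable in exponential-family form using (\ref{eq:expfam}) with $\theta^* := \theta(\mu^*)$:
\[
\log \Sgrow(y) \;=\; \theta^{*\top} y - \log Z(\theta^*),
\]
and observe that Proposition~\ref{prop:exp} already gives
\[
\underline{D} \;=\; D(\bar{P}_{\mu^*} \| P_0) \;=\; \theta^{*\top} \mu^* - \log Z(\theta^*).
\]
Thus $\Sgrow(y) \geq e^{\underline{D}}$ is equivalent to the affine inequality $\theta^{*\top}(y - \mu^*) \geq 0$, which I would read as a supporting-hyperplane statement. For $y \in \meanspace_1 \cap \meanspace$, this is precisely the first-order variational inequality characterizing the minimizer $\mu^*$ of the convex, differentiable function $\mu \mapsto D(\bar{P}_\mu \| P_0)$ on the convex set $\meanspace_1 \cap \meanspace$, combined with the standard exponential-family identity $\nabla_\mu D(\bar{P}_\mu \| P_0) = \theta(\mu)$ (conjugacy of $\log Z$ and $D$). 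For $y \in \meanspace_1 \setminus \meanspace$, Condition ALT-$\meanspace_1$(b) supplies a point $\alpha y \in \meanspace_1 \cap \meanspace$ with $\alpha \in (0,1)$; applying the same inequality at $\alpha y$ yields $\alpha \theta^{*\top} y \geq \theta^{*\top} \mu^* \geq 0$ (the latter because $\log Z(\theta^*) \geq \log Z(0) = 0$ by Jensen and $\underline{D} \geq 0$), and $\alpha \leq 1$ then promotes this to $\theta^{*\top} y \geq \theta^{*\top} \mu^*$.

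Given (\ref{eq:subset}), the chain displayed just below it is direct: the first inequality is (\ref{eq:subset}) integrated against $\mathbf{1}_{Y \in \meanspace_1}\, dP_0$; the second is nonnegativity of $\Sgrow$; the third is the e-variable property ${\mathbb E}_{P_0}[\Sgrow] \leq 1$ from Proposition~\ref{prop:exp}. Dividing through by $e^{\underline{D}}$ yields (\ref{eq:basiccsc}).

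For the one-dimensional MLE bound (\ref{eq:MLEbound}), I would use two ingredients. First, in a one-dimensional exponential family the MLE in mean parameterization satisfies $\hat\mu(y) = y$ on $\meanspace$, so $\sup_{\mu \in \meanspace}\bar{p}_\mu(y)/p_0(y) = \exp D(\bar{P}_y \| P_0)$. Second, $D(\bar{P}_\mu \| P_0)$ is strictly convex with minimum $0$ at $\mu=0$, and $\theta(\mu)$ is monotonically increasing through the origin, so $\sgn(\theta^*) = \sgn(\mu^*) = s$. Combining these, the event $\{D(\bar{P}_Y \| P_0) \geq D,\ \sgn(Y) = s\}$ equals $\{sY \geq s\mu^*\} = \{\theta^*(Y-\mu^*) \geq 0\} = \{\bar{p}_{\mu^*}(Y)/p_0(Y) \geq e^D\}$, which gives the claimed equality of probabilities. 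The upper bound $\leq e^{-D}$ then follows by applying (\ref{eq:basiccsc}) with $\meanspace_1 = \{\mu : s\mu \geq s\mu^*\}$, a convex half-line for which Condition ALT-$\meanspace_1$ holds and $\underline{D} = D(\bar{P}_{\mu^*} \| P_0) = D$. The main conceptual step is the first-order/variational-inequality reading of (\ref{eq:subset}); the rest is bookkeeping around densities and signs.
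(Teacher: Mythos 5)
Your proposal is correct, and it reaches the crucial pointwise bound (\ref{eq:subset}) by a genuinely different (more self-contained) route than the paper. The paper recycles the growth bound (\ref{eq:meanimp}) delivered by Proposition~\ref{prop:exp} (which in turn rests on the Csisz\'ar--Tops{\o}e Pythagorean property via Proposition~\ref{prop:pythagorasminimax}): since $\log \Sgrow(y)=\theta^{*\top}y-\log Z(\theta^*)$ is affine, the statement ``${\mathbb E}_P[\log \Sgrow]\geq \underline{D}$ for every $P\in\cH_1$, hence for every mean $\mu\in\meanspace_1$'' is literally the pointwise statement for every $y\in\meanspace_1$. You instead re-derive the same affine inequality $\theta^{*\top}(y-\mu^*)\geq 0$ directly as the first-order optimality condition for $\mu^*$ on $\meanspace_1\cap\meanspace$, using $\nabla_\mu D(\Pmv_\mu\|P_0)=\theta(\mu)$, and you treat $y\in\meanspace_1\setminus\meanspace$ with the same ALT-$\meanspace_1$(b) segment trick that the paper uses inside the proof of Proposition~\ref{prop:exp}; your sign argument via $\log Z(\theta^*)\geq \log Z(0)=0$ and $\underline{D}\geq 0$ is fine. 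The two arguments are equivalent in content --- indeed $D(\Pmv_y\|P_0)-D(\Pmv_y\|\Pmv_{\mu^*})-D(\Pmv_{\mu^*}\|P_0)=\theta^{*\top}(y-\mu^*)$, so your variational inequality is exactly the Pythagorean inequality restricted to $\cE$ --- but yours needs only the existence and characterization of $\mu^*$, not the compatibility of $\cH_1$ nor the minimax machinery, which is consistent with the paper's remark that $\cH_1$'s only role is to make $\Sgrow$ well-defined. For (\ref{eq:MLEbound}) your argument (MLE equals $Y$ on $\meanspace$, monotonicity of $D(\Pmv_\mu\|P_0)$ and of the affine log-ratio, then (\ref{eq:basiccsc}) applied to the half-line $\{\mu: s\mu\geq s\mu^*\}$) matches the paper's, which splits the event into $\{Y\geq\mu^*\}$ and $\{0\leq Y<\mu^*\}$ and applies Markov to $\Sgrow$ directly; the content is the same.

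One small caveat on your key step: the textbook first-order condition requires minimizing over a convex set on which the objective is differentiable at $\mu^*$, i.e.\ it tacitly assumes $\meanspace_1\cap\meanspace$ is convex and that the gradient identity holds two-sidedly at $\mu^*$. For regular (steep) families this is unproblematic, but for irregular families $\meanspace$ need not be convex and $\mu^*$ may lie on its boundary, where only one-sided derivatives are available; the paper avoids this by importing the inequality from Proposition~\ref{prop:exp}, whose proof only uses nonnegativity of KL divergence and the linearity argument along rays. In the edge cases you would either need to argue along segments staying in $\meanspace$ or fall back on that Pythagorean-style derivation.
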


(\ref{eq:basiccsc}) is the bound first developed by \cite{Csiszar84}, who presented it as an extension of part of Sanov's theorem. In the one-dimensional case, it can also be seen as the `generic' Chernoff bound. 
This bound is usually formulated as 
$$
``P_0(Y\geq \mu^*) \leq \inf_{\theta > 0} {\mathbb E}_{P_0}[e^{\theta^{\top} Y}]e^{- \theta^{\top} \mu^*}."$$ 
Evaluating the infimum shows that the right-hand side is equal to $\exp(-\underline{D})$ with $\underline{D}= D(\bar{P}_{\mu^*} \| P_0)= \inf_{\mu \geq \mu^*} D(\bar{P}_{\mu} \| P_0)$, making the bounds equivalent; hence our choice of the {\em CSC}-terminology (this is the {\em generic\/} Chernoff bound; when authors speak of {\em the\/} Chernoff bound they usually refer to a specific instance of it, with $Y= \sum_{i=1}^n X_i$ and $X_i$ binary). 
Links between Sanov's theorem, Csisz\'ar's extension thereof and Chernoff have been explored before; see e.g. Van Erven [\citeyear{VanErven12}].

We also note that, while versions of  (\ref{eq:MLEbound}) have been known for a long time, it is sometimes considered surprising, because a direct, more naive application of Markov's inequality would give, with $
S' = \frac{\sup_{\mu \in \meanspace}\bar{p}_{\mu}(Y)}{p_0(Y)}$,
$$
P_0( S' \geq e^{D}) \leq e^{-D} \cdot  {\mathbb E}_{P_0}\left[S' \right]
= e^{-D} \cdot \int p_0(y) \cdot \frac{\sup_{\mu \in \meanspace}\bar{p}_{\mu}(y)}{p_0(y)} d \rho(y) \gg e^{-D},
$$
which can be considerably weaker, since $S'$ is not an e-variable. The result (\ref{eq:MLEbound}) shows that we {\em can\/} establish an underlying e-variable, and it is given by $\bar{p}_{\mu^*}/p_0$. 

Even though Theorem~\ref{thm:old} is not new, we give its proof in full since its ingredients will be reused later on:
\begin{proof}{\bf [of Theorem~\ref{thm:old}]}
Let $\meanspace_1$ and $\cH_1$ be as in the theorem, so that  ALT conditions hold.  Note that we may choose $\cH_1$ to be convex. By the final part of Proposition~\ref{prop:exp} we then  know that  $\Sgrow = \frac{\pmv_{\mu^*}({Y})}
{p_{0}({Y})}$ and that for all $P \in \cH_1$, with $\mu ={\mathbb E}_P[Y]$ and $\underline{D} = D(\Pmv_{\mu^*} \| P_0)$, we have 
\begin{equation}\label{eq:meanimp}
\mu \in \meanspace_1 \Rightarrow {\mathbb E}_{P} \left[ \log \frac{\pmv_{\mu^*}({Y})}
{p_{0}({Y})} \right] \geq  \underline{D},
\end{equation}
where the expectation is well-defined. 
Now note that the right-hand side can be rewritten, with $\theta^* = \theta(\mu^*)$, as 
$$
\theta^{*\top} \mu - \log Z(\theta^*) \geq  \underline{D}.
$$
Thus
(\ref{eq:meanimp}) can be rewritten as:
$$
y \in \meanspace_1 \Rightarrow  \theta^{*\top} y - \log Z(\theta^*)  \geq  \underline{D},
$$
or, again equivalently,
$$
y \in \meanspace_1 \Rightarrow \log \pmv_{\mu^*}(Y)/\pmv_0(Y)\geq  \underline{D}.
$$
The result (\ref{eq:subset}) follows after exponentiating.
The subsequent inequality (\ref{eq:basiccsc}) now readily follows as well.

For (\ref{eq:MLEbound}), we only consider the case $\mu^* > 0$, the case $\mu^* < 0$ being completely analogous. We have
\begin{align}\label{eq:mannheim}
   &  P_0\left(\log \sup_{\mu \in \meanspace} 
    \frac{\bar{p}_{\mu}(Y)}{p_0(Y)} \geq {D}, Y \geq \mu^* \right)=  P_0\left(Y \geq \mu^* \right) =
    \nonumber \\
   &  P_0\left(
    \frac{\bar{p}_{\mu^*}(Y)}{p_0(Y)} \geq e^{D}, Y \geq \mu^* \right)\leq P_0(S_{\grow} \geq D) \leq e^{-D},
\end{align}
which follows because, by the {\em robustness property of exponential families\/} \citep[Chapter 19]{grunwald2007minimum}(but also easily verified directly by considering the natural parameterization), $\log p_{\mu^*}(y)/p_0(y) = D(\bar{P}_{\mu^*} \| P_0)$ if $y= \mu^*$, and $\log p_{\mu^*}(y)/p_0(y)$ is increasing in $y$ for $\mu^* > 0$, which implies that the events inside the left and the right probability are identical. 
On the other hand,

\begin{align}\label{eq:kaiserslautern}
  &  P_0\left(\log \sup_{\mu \in \meanspace} 
    \frac{\bar{p}_{\mu}(Y)}{p_0(Y)} \geq {D}, 0 \leq Y < \mu^*) \right)= P_0\left(\log 
    \frac{\bar{p}_Y(Y)}{p_0(Y)} \geq {D}, 0 \leq Y < \mu^*) \right)= \nonumber \\ 
    & P_0 (D(\bar{P}_Y \|P_0) \geq D(\bar{P}_{\mu^*} \|P_0), 0 \leq Y < \mu^*) = 0, 
\end{align}
where the first equality follows because the $\mu$ maximizing the likelihood $\bar{p}_{\mu}(Y)$ is uniquely given by $Y$ if $Y \in \meanspace$, and the second is again the robustness property of exponential families, and the third follows because KL divergence $D(\bar{P}_{\mu} \| P_0)$ is strictly increasing in $\mu$ if $\mu > 0$. 

Together, (\ref{eq:mannheim}) and (\ref{eq:kaiserslautern}) imply the result. 
\end{proof}

\section{Surrounding $\meanspace_1$}
\label{sec:surrounding}
We now consider tests and concentration bounds  in the often more relevant setting of `surrounding' $\meanspace_1$.
Formally, we call $\meanspace_1$ {\em surrounding\/} if its complement, $\meanspace_1^{\comp} := \conv(\cY) \setminus \meanspace_1$ is an open, bounded, connected set containing $\nv$, and contained in $\reals^d$. We will call surrounding $\meanspace_1$ {\em nice\/} if (a) $\meanspace_1^{\comp}$ is contained in the interior 
of the mean-value space $\meanspace$ of exponential family $\cE$ generated by $P_0$ and also (b) $\meanspace_1^{\comp}$ is {\em $\nv$-star-shaped}, which means that for any straight line going through $\nv$, its intersection with $\meanspace_1^{\comp}$ is an interval, so that it crosses the boundary $\bd(\meanspace_1^{\comp})$ only once. Note in particular that any convex $\meanspace_1^{\comp}$ is automatically star-shaped; see Figure~\ref{fig:starA} and~\ref{fig:starB} for two examples of star-shaped $\meanspace_1^{\comp}$. Note also that any {\em nice\/} $\meanspace_1$ automatically satisfies Condition ALT-$\meanspace_1$. 
\begin{figure}
	\begin{minipage}[t]{0.5\linewidth}
		\centering
        \includegraphics[width=2.2in]{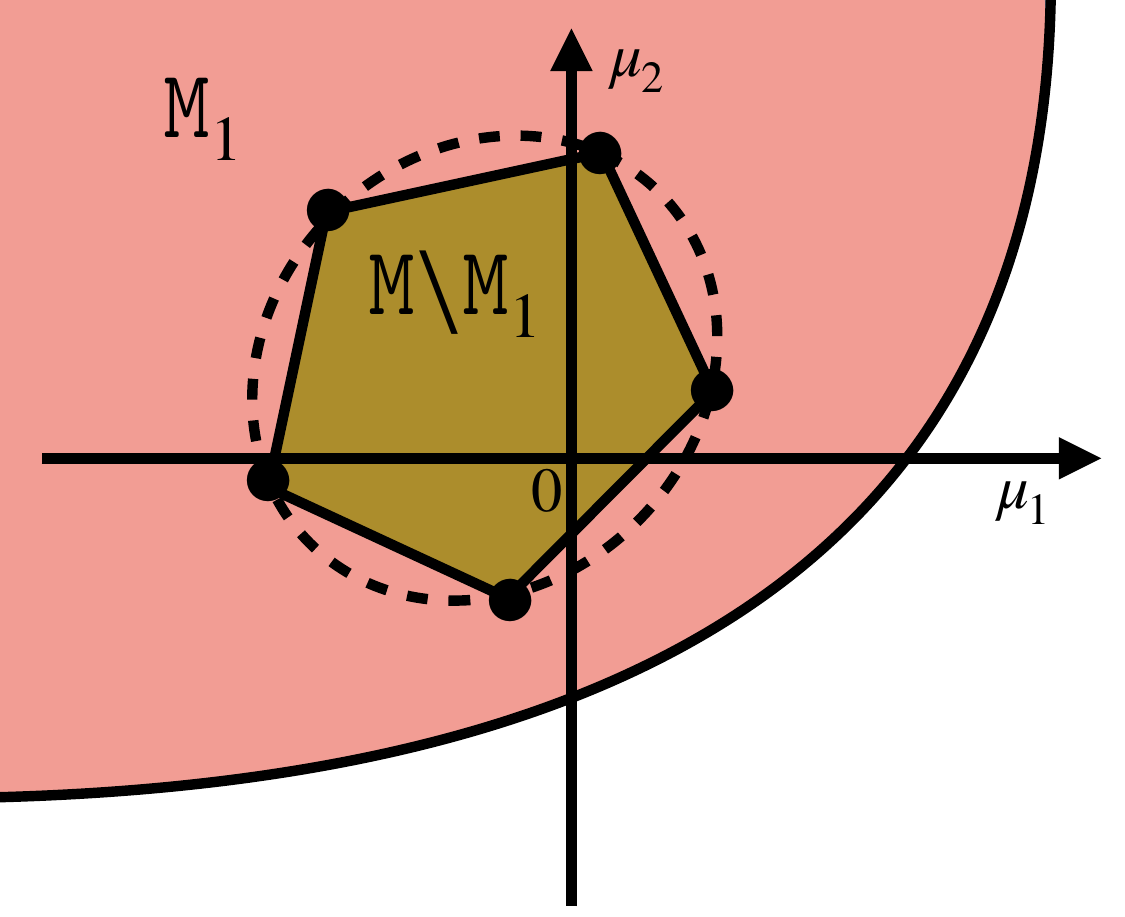}
        \caption{\label{fig:starA} 
        Surrounding, nice, $\mathtt{M}_1$ with a finite nice partition into convex sets. This figure is obtained by taking $P_0$ a Gamma distribution on $X$ and defining $Y= (Y_1,Y_2) = (\log X,X-c)$ for a constant $c > 1$. Then $\cE$ is a translated Gamma family with sufficient statistic $Y$ and mean-value space $\meanspace= \{(y_1,y_2): y_1 \in \reals, y_2 = e^{y_1} - c \}$   (unlike in Figure~\ref{fig:enter-label}, we have $\meanspace_1 \subset \meanspace$ here).}
	\end{minipage}\ \ \ \ 
	\begin{minipage}[t]{0.5\linewidth}
		\centering
		\includegraphics[width=2.2in]{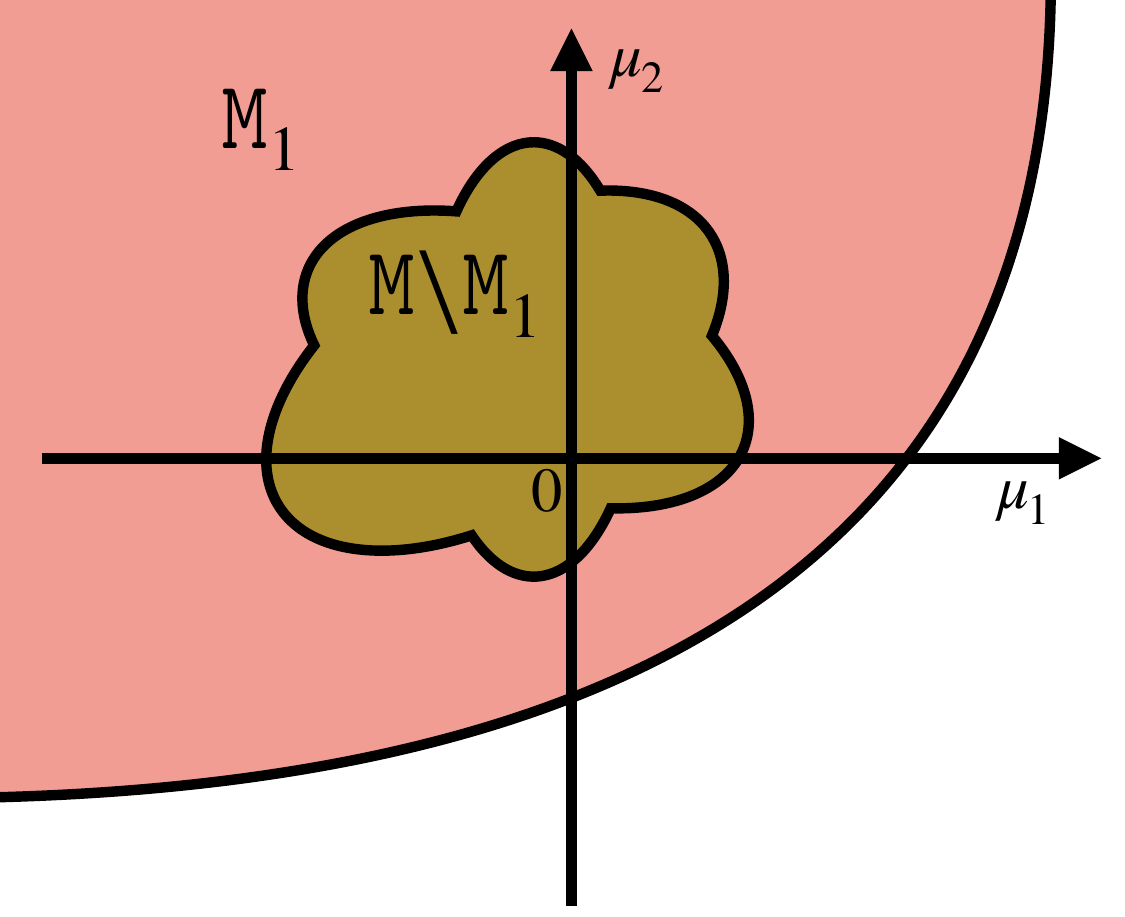}
         \caption{\label{fig:starB} Surrounding, nice $\mathtt{M}_1$ that cannot be partitioned into a finite number of convex sets (again we show the translated Gamma family).}
	\end{minipage}
\end{figure}
\commentout{
Our results will hold for surrounding $\Theta_1$ such that 
In practice it may be more likely that we want to bound the probability of a surrounding set $\meanspace_1$ in the mean-value space. Since $P_0(\mu \in \meanspace_1) = P_0(\theta \in \Theta_1)$ with $\Theta_1 = \theta(\meanspace_1)$, we may use the results to get such a bound. In principle it might happen that $\meanspace_1$, although itself $\nv$-star-shaped, whereas $\theta(\meanspace_1)$ is not. In that case, we can use the results below to get an upper bound on the $P_0$-probability of the $\nv$-star-shaped hull $\textsc{star}_\nv(\Theta_1)$, and uses this to further upper bound the $P_0$-probability of $\meanspace_1$, which cannot be larger.}

We now develop optimal e-variables for surrounding and regular $\meanspace_1$. 
The GROW criterion is still meaningful in this setting, and we discuss it in Section~\ref{sec:grow} below. Yet alternative, {\em relative\/} growth criteria are sometimes more meaningful in hypothesis testing \citep{ramdas2023savi,GrunwaldHK19} and one of these, minimax regret,  more directly leads to  corresponding CSC-type bounds. We consider these in Section~\ref{sec:regret} and~\ref{sec:inequality_theorem}. 
\subsection{GROW for $d=1$}\label{sec:grow}
We may again consider the GROW criterion for general $\cH_1$
that could be any set of distributions compatible with a given nice surrounding $\meanspace_1$ and $d \geq 1$, but this turns out to be surprisingly complicated in general. 
We only managed to find a simple characterization of $\Sgrow$ for the case $d=1$, $\meanspace_1 \subset \meanspace$ and $\cH_1 =  \cE_1= \{\bar{P}_{\mu}: \mu \in \meanspace_1\}$; that is, we are now testing $P_0$, a member of 1-dimensional exponential family $\cE$, against a subset $\cE_1$ of $\cE$ that is bounded away from $P_0$.

In the sequel we denote by $\pmv_{W}(X^n) := \int \pmv_{\mu}(X^n) dW(\mu)$ the Bayes marginal density corresponding to prior measure $W$.

\begin{theorem}\label{thm:main_new}
Let $P_0$ be a distribution for 1-dimensional $Y \subset \reals$, and suppose that $\meanspace_1$ is nice, i.e. $\meanspace_1^{\comp} = (\mu^{-}_1,\mu^+_1)$ is an open interval containing $\nv$ and contained in  the mean-value parameter space $\meanspace$ for the 1-dimensional exponential family generated by $P_0$. 
Then, among all distributions $W$ on the boundary $\bd(\meanspace_1^{\comp})= \{ \mu^-_1,\mu^+_1\}$, the minimum $D(\bar{P}_{W}|| P_{0})$
is achieved by a distribution $W^*$ that satisfies 
\begin{equation}\label{eq:jantje}
D(\bar{P}_{W^*}|| \bar{P}_{\mu_0}) = {\mathbb E}_{\bar{P}_{\mu^-_1}} \left[ 
\log \frac{\bar{p}_{W^*}(Y)}{p_{0}(Y)}
\right] = 
{\mathbb E}_{\bar{P}_{\mu^+_1}} \left[ \log \frac{\bar{p}_{W^*}(Y)}{p_{0}(Y)}
\right].
\end{equation}
The GROW e-variable relative to  $\cE_1 =\{ \bar{P}_{\mu}: \mu \in \meanspace_1\cap \meanspace\}$, denoted $S_{\grow}$, and the GROW e-variable relative to $\cE_1^{\bd} := \{\bar{P}_{\mu}: \mu \in \bd(\meanspace_1^{\comp})\}$, denoted $S_{\grow}^{\bd}$, are {\em both\/} given by:
\begin{equation}\label{eq:pietje}
S_{\textsc{grow}} = S^{\bd}_{\textsc{grow}}
= \frac{\bar{p}_{{W}^*}(Y)}{p_{0}(Y)},
\end{equation}
i.e., the support of the prior ${W}^*$ on $\meanspace_1$ minimizing the KL divergence $D(\bar{P}_{{W}^*} \| P_0)$ is fully concentrated on the boundary of $\meanspace_1^{\comp}$. 
\end{theorem}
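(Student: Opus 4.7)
The plan is to prove the theorem in three stages: first establish existence of a two-point boundary minimizer $W^*$ satisfying the equalizer condition (\ref{eq:jantje}); second, identify $S^{\bd}_{\textsc{grow}}$; third, extend this identification to the larger alternative $\cE_1$.

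For the first stage, I parametrize priors on the boundary $\{\mu_1^-,\mu_1^+\}$ by $w\in[0,1]$, writing $W_w = w\delta_{\mu_1^+} + (1-w)\delta_{\mu_1^-}$. The map $w \mapsto D(\bar P_{W_w}\|P_0)$ is strictly convex and continuous on $[0,1]$, and its one-sided derivatives at the endpoints point strictly into the interior (e.g.\ at $w=0$, adding an infinitesimal mass at $\mu_1^+$ strictly decreases $D$ because $E_{\bar P_{\mu_1^+}}[\log(\bar p_{\mu_1^-}/p_0)] < D(\bar P_{\mu_1^-}\|P_0)$). Hence a unique interior minimizer $w^* \in (0,1)$ exists. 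Differentiating $D(\bar P_{W_w}\|P_0)$ and using $\int \partial_w \bar p_{W_w}\,dy = 0$, the first-order condition becomes $E_{\bar P_{\mu_1^+}}[h] = E_{\bar P_{\mu_1^-}}[h]$ where $h := \log(\bar p_{W^*}/p_0)$; taking the $W^*$-weighted average of these identities shows the common value equals $D := D(\bar P_{W^*}\|P_0)$, giving (\ref{eq:jantje}).

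For the second stage, $\conv(\cE_1^{\bd}) = \{\bar P_{W_w} : w \in [0,1]\}$ is convex, so the Pythagorean property holds at $\bar P_{W^*}$, and Proposition~\ref{prop:pythagorasminimax} applied with $\cH_1 = \cE_1^{\bd}$ yields $S^{\bd}_{\textsc{grow}} = \bar p_{W^*}/p_0$ with growth value $D$. For the third stage, since $\cE_1 \supset \cE_1^{\bd}$, enlarging the infimum set can only shrink the GROW value, so the GROW value for $\cE_1$ is at most $D$. To match this upper bound and conclude $S_{\textsc{grow}} = \bar p_{W^*}/p_0$, I apply Proposition~\ref{prop:pythagorasminimax} with $\cH_1 = \cE_1$ and candidate $P_1^* = \bar P_{W^*}$. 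This reduces the entire theorem to the single Pythagorean inequality
\[
g(\mu) := E_{\bar P_\mu}[h(Y)] \geq D \quad \text{for all } \mu \in \meanspace_1 \cap \meanspace.
\]
By (\ref{eq:jantje}), $g(\mu_1^\pm) = D$, so it suffices to show $g$ is nondecreasing on $[\mu_1^+,\infty)\cap\meanspace$ and nonincreasing on $(-\infty,\mu_1^-]\cap\meanspace$.

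The monotonicity of $g$ is the crux of the proof. Passing to the natural parameter and setting $\tilde g(\theta) := E_{P_\theta}[h(Y)]$, the exponential-family score identity gives $\tilde g'(\theta) = \mathrm{Cov}_{P_\theta}(h(Y),Y)$. Using the log-sum-exp form $h(y) = \log[w^* e^{\theta^+ y - A(\theta^+)} + (1-w^*) e^{\theta^- y - A(\theta^-)}]$, one checks that $h$ is convex in $y$ with $h'(y) = r(y)\theta^+ + (1-r(y))\theta^-$, where $r(y)$ is the posterior probability of $\mu_1^+$ given $y$, strictly increasing from $0$ to $1$. My plan to show $\mathrm{Cov}_{P_\theta}(h,Y) \geq 0$ for $\theta \geq \theta^+$ is to combine (a) convexity of $h$, (b) the monotone-likelihood-ratio property of the one-dimensional family (so $P_\theta$ stochastically dominates $P_{\theta^+}$), and (c) the equalizer condition itself, which pins down the balance between the decreasing and increasing regions of $h$ so that the net contribution is nonnegative under any $P_\theta$ sufficiently shifted to the right. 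An alternative, which works transparently in the Gaussian case via Stein's lemma, is to prove global convexity of $\tilde g$ using $\tilde g''(\theta) = \mathrm{Cov}_{P_\theta}(h(Y), (Y-\mu(\theta))^2)$ and the explicit structure of $h$. Once the monotonicity is in place, the required inequality and hence the theorem follow via Proposition~\ref{prop:pythagorasminimax}.
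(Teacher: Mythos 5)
Your stages 1 and 2 are sound: deriving the equalizer (\ref{eq:jantje}) from the first-order condition of the strictly convex map $w \mapsto D(\bar P_{W_w}\|P_0)$ (with the endpoint-derivative signs you state, which are correct since $\theta^-\mu_1^+ < \theta^-\mu_1^-$) is a legitimate alternative to the paper's intermediate-value argument, and the identification of $S^{\bd}_{\grow}$ goes through. Your reduction of the whole theorem to the single inequality $g(\mu) := {\mathbb E}_{\bar P_\mu}[\log(\bar p_{W^*}(Y)/p_0(Y))] \geq D$ on $\meanspace_1 \cap \meanspace$, via monotonicity of $g$ on either side of $(\mu_1^-,\mu_1^+)$, is exactly the paper's reduction to (\ref{eq:realfinalwork}) via Lemma~\ref{lem:brown}. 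But that monotonicity is the entire difficulty, and your proposal does not prove it. Your primary plan --- combining convexity of $h$, the MLR/stochastic-dominance structure, and the equalizer to conclude $\mathrm{Cov}_{P_\theta}(h(Y),Y)\geq 0$ for $\theta \geq \theta^+$ --- is not an argument: convexity of $h$ plus stochastic dominance does not control the sign of this covariance (e.g.\ by the Hoeffding covariance identity it is an integral of the sign-changing $h'$ against a nonnegative weight, and nothing you invoke shows the positive part wins, even at $\theta=\theta^+$ itself). This is precisely the quantity the paper displays in (\ref{eq:derivative}) and explains is hard to analyze directly --- in particular it is not a priori clear the derivative vanishes at only one point. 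Your fallback, proving global convexity of $\tilde g$ via $\tilde g''(\theta)=\mathrm{Cov}_{P_\theta}(h(Y),(Y-\mu(\theta))^2)$, fails in general: the paper explicitly notes the second derivative is not necessarily positive (the Gaussian/Stein computation is special and does not transfer).

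The paper closes this gap with a genuinely different tool: the variation-diminishing (SVR) property of one-dimensional exponential families \citep{brown1981variation}. Since $y \mapsto c + h(y)$ is strictly convex and non-monotone for every constant $c$, it changes sign at most twice; the SVR property then implies $\mu \mapsto {\mathbb E}_{\bar P_\mu}[h(Y)]+c$ changes sign at most twice for every $c$, which forces $g$ to be unimodal (one interior minimum, strictly monotone on each side). Combined with $g(0) = -D(P_0\|\bar P_{W^*})<0$ and $g(\mu_1^\pm)=D>0$ from (\ref{eq:jantje}), the minimum lies in $(\mu_1^-,\mu_1^+)$ and the required monotonicity outside the interval follows. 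Unless you import this sign-change argument (or an equivalent total-positivity result), your third stage remains a plan rather than a proof, so the theorem is not established by your proposal as written.
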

\begin{proof}{\bf (of Theorem~\ref{thm:main_new}),}
Define, for $\mu \in \meanspace$ 
and $w^* \in [0,1]$,
\begin{align}
 \label{eq:brenda}
& f(\mu,w^*) = 
\mathbb{E}_{ \bar{P}_\mu} 
\left[\log \frac{(1-w^*) \bar{p}_{\mu^-_1}(Y)+  w^* \bar{p}_{\mu^+_1}(Y)}{p_{\mu_0}(Y)} \right]
\end{align}
and note that, for each $\mu$, it holds that  $f(\mu,w^*)$ is continuous in $w^*$.
%
Now consider $f(\mu^-_1,w^*) = - D(\bar{P}_{\mu^-_1} \| \bar{P}_{W^*}) + D(\bar{P}_{\mu^-_1} \| {P}_{0})$. Minus the first term, $D(\bar{P}_{\mu^-_1} \| \bar{P}_{W^*})$, 
is $0$ at $w^*=0$ and continuously monotone increasing in $w^*$, 
since KL divergence is nonnegative and strictly convex in its second argument. Therefore $f(\mu^-_1,w^*)$ is itself continuously monotone decreasing in $w^*$
with $f(\mu^-_1,0) = D(\bar{P}_{\mu^-_1} \| P_0) > 0$. Also, $f(\mu^-_1,1) = - D(\bar{P}_{\mu^-_1} \| \bar{P}_{\mu^+_1}) + D(\bar{P}_{\mu^-_1} \| P_{0}) < 0$, since KL divergence $D(\bar{P}_{\mu^-_1} \| \bar{P}_{\mu'})$ is strictly increasing in $\mu'$, for $\mu'> \mu^-_1$.


Analogously  $f(\mu^+_1,w^*)$ is continuously monotone increasing in $w^*$,
$f(\mu^+_1,1) = D(\bar{P}_{\mu^+_1} \| P_0) > 0$ and  $f(\mu^+_1,0) < 0$.

This shows that there exists $0 < w^{\circ}< 1 $ such that $f(\mu^-_1,w^{\circ}) =
f(\mu^+_1,w^{\circ})$. This implies that there exists a $W^*$ (with $W^*(\{\mu_1^+ \}) = w^{\circ}$) such that 
%
the rightmost equality in (\ref{eq:jantje}) holds. But this rightmost equality implies that for all $w' \in [0,1]$, 
$$
(1-w') {\mathbb E}_{\bar{P}_{\mu^-_1}} \left[ 
\log \frac{\bar{p}_{W^*}(Y)}{p_{0}(Y)}
\right] + w'
{\mathbb E}_{\bar{P}_{\mu^+_1}} \left[ \log \frac{\bar{p}_{W^*}(Y)}{p_{0}(Y)}
\right] = {\mathbb E}_{\bar{P}_{\mu^-_1}} \left[ 
\log \frac{\bar{p}_{W^*}(Y)}{p_{0}(Y)}
\right].
$$
Plugging in $w' = w^{\circ}$, we get the left equality in (\ref{eq:jantje}). 
Now (\ref{eq:jantje}) in turn gives that 
\begin{equation}\label{eq:barcelona}
\min_{\mu \in \bd(\meanspace_1^{\comp})} 
{\mathbb E}_{\bar{P}_{\mu}} \left[ 
\log \frac{\bar{p}_{W^*}(Y)}{p_0(Y)}
\right] = D(\bar{P}_{W^*}|| \bar{P}_{\mu_0}),
\end{equation}
whereas for any probability density $q$ for $\cY$,
\begin{equation}\label{eq:finalwork}
\min_{\mu \in \bd(\meanspace_1^{\comp})} 
{\mathbb E}_{\bar{P}_{\mu}} \left[ 
\log \frac{q(Y)}{p_0(Y)}
\right] 
\leq {\mathbb E}_{\bar{P}_{W^*}} \left[ 
\log \frac{q(Y)}{p_0(Y)}
\right]
\leq 
 {\mathbb E}_{\bar{P}_{W^*}} \left[ 
\log \frac{\bar{p}_{W^*}(Y)}{p_0(Y)}
\right]= 
D(\bar{P}_{W^*}|| \bar{P}_{\mu_0}),
\end{equation}
which together with (\ref{eq:barcelona}) shows that $S^{\bd}_{\grow} = \frac{\bar{p}_{{W}^*}(Y)}{p_{0}(Y)}$, since every well-defined e-variable can be written as $q(Y)/p_0(Y)$ for some probability density $q$. 
Also, similarly to
(\ref{eq:finalwork}), we have 
\begin{equation*}
\inf_{\mu \in \meanspace_1} 
{\mathbb E}_{\bar{P}_{\mu}} \left[ 
\log \frac{q(Y)}{p_0(Y)}
\right] 
\leq {\mathbb E}_{\bar{P}_{W^*}} \left[ 
\log \frac{q(Y)}{p_0(Y)}
\right] \leq 
D(\bar{P}_{W^*}|| \bar{P}_{\mu_0}),
\end{equation*}
so if we could show
\begin{equation}\label{eq:realfinalwork}
    \inf_{\mu \in \meanspace_1} 
{\mathbb E}_{\bar{P}_{\mu}} \left[ 
\log \frac{\bar{p}_{W^*}(Y)}{p_0(Y)}
\right]= 
D(\bar{P}_{W^*}|| \bar{P}_{\mu_0}),
\end{equation}
then the above two statements  would together also imply that $S_{\grow} = \frac{\bar{p}_{{W}^*}(Y)}{p_{0}(Y)}$, and we would be done. 
But (\ref{eq:realfinalwork}) follows  by (\ref{eq:jantje})  
\commentout{
This in turn implies that (a) for all $w\in [0,1]$ with $w \neq w^{\circ}$,
$\inf_{\mu \in \{\mu_1^-,\mu_1^+\}} f(\mu,w) < f(\mu_1^-,w^{\circ}) = f(\mu_1^+,w^{\circ})$, whereas (b) for both $\mu \in \{\mu_1^-,\mu_1^+\}$, $\sup_{w \in [0,1]} f(\mu,w) \geq f(\mu_1^-,w^{\circ}) = f(\mu_1^+,w^{\circ})$. (a) and (b) together 
show that $w^{\circ}$ achieves
$$
\max_{w \in [0,1]} \inf_{\mu \in \{\mu_1^-,\mu_1^+\}} f(\mu,w) = \inf_{\mu \in \{\mu_1^-,\mu_1^+\}} \max_{w \in [0,1]}  f(\mu,w)
$$
which shows that the minimum $D(\bar{P}_{W} \| P_0)$ over all $W$ is achieved by $W^*$, verifying the claim above (\ref{eq:jantje}), and that the
GROW e-variable relative to  $\bd(\meanspace_1^{\comp})$ is given by (\ref{eq:pietje}). 
It only remains to show that 
\begin{align}
\nonumber
\min\limits_{\mu \in {\meanspace}_1} f(\mu,w^{\circ}) 
=\min\limits_{\mu \in \bd({\meanspace}_1^{\comp})} f(\mu,w^{\circ}).
\end{align}
But this is implied by 
}
together with the following lemma, which thus completes the proof: 
\begin{lemma}\label{lem:brown} $f(\mu,w^{\circ})$
is increasing on $\{ \mu \in \meanspace_1: \mu \geq \mu^+ \}$ and decreasing on 
$\{ \mu \in \meanspace_1: \mu \leq \mu^- \}$.
\end{lemma}
\end{proof}
Lemma~\ref{lem:brown} is proved in  Section~\ref{sec:proof}. It follows from the fact that exponential families represent {\em variation reducing kernels} \citep{brown1981variation}, a notion in theoretical statistics that seems to have been largely forgotten, and that we recall in Section~\ref{sec:proof}. In that section we also explain why this result, even for $d=1$, is difficult to prove, which also explains why proving anything nonasymptotic for the case $d > 1$ is currently beyond our reach. 

\subsection{Alternative Optimality Criteria: minimax redundancy and regret}\label{sec:regret}
$\Sgrow$ is difficult to characterize when $d > 1$ and $\cH_1$ is surrounding; while it seems intuitive that, at least under some additional regularity conditions, it is still given by a likelihood ratio with a Bayes mixture concentrated on the boundary of $\meanspace_1$, we did not manage to prove this (we indicate what the difficulty is towards the end of  Section~\ref{sec:proof}); we can say more though for the special case that $\meanspace_1^{\comp}$ is a KL ball and $Y = n^{-1} \sum_{i=1}^n X_i$ as $n \rightarrow \infty$; see Section~\ref{sec:asymptotic_growth_rate}.

However, in e-variable practice we often deal with $\cH_1$ that can be partitioned into a family of subsets $\{\cH_{1,r}: r \in \cR\}$ such that, ideally, we would like to use the GROW e-variable relative to the $\cR$ that actually contains the alternative. This was called the {\em relative GROW\/} criterion by \cite{GrunwaldHK19} and it was used by, for example, \cite{TurnerLG24}. We thus face a collection of e-variables $\cS= \{S_{r,\grow} : r \in \cR\}$ where each $S_{r,\grow}$ is GROW for $\cH_{1,r}$. If an oracle told us beforehand  ``if the data come from $P \in \cH_1$, then in fact $P \in \cH_{1,r}$''
then we would want to use $S_{r,\grow}$. Not having access to such an oracle, we want to use an e-variable that loses the least e-power compared to $S_{r,\grow}$ for the `true' $r:= r(P)$ (i.e. such that $P \in \cH_{1,r}$), in the worst-case over all $r \in \cR$. 
This is akin to what in the information theory literature is called a {\em minimax redundancy\/} approach \citep{ClarkeB94,grunwald2007minimum}. This approach is still hard to analyze in general but is amenable to the asymptotic analysis we provide in the next section. 
In a minor variation of this idea, 
 used in an e-value context before by \cite{orabona2021tight,jang2023tighter}, we may consider the e-variable that loses the least e-power compared $S_{\breve{r},\grow}$ for the $\breve{r}$ that is optimal with hindsight for the data at hand, i.e. achieving $\max_{r \in \cR} S_{r,\grow}$; thus $\breve{r}$ can be thought of as a maximum likelihood estimator. This is akin to what information theorists call {\em minimax individual-sequence regret\/} \citep{grunwald2007minimum}. This final approach {\em can\/} be analyzed nonasymptotically and leads to an analogue of the CSC theorem. We now formalizeboth the redundancy and regret approaches.

Thus, suppose that $\meanspace_1$ is surrounding and {\em nice\/} as defined in the beginning of Section~\ref{sec:surrounding}, and let
$\{ \meanspace_{1,r}: r \in \cR\}$ be a partition of $\meanspace_{1}$ (see Figure~\ref{fig:starA}). We will restrict attention to {\em nice\/} partitions, i.e. partitions  such that 
\begin{equation}\label{eq:nicepart}
\text{for all $r \in \cR$,\  
$\meanspace_{1,r}$ is a closed convex subset of $\meanspace_1$ 
with $\meanspace_{1,r} \cap \bd(\meanspace_1^{\comp})\neq \emptyset$}
\end{equation}
Let $\cH_{1,r} := \{P \in \cH_1: {\bf E}_{Y \sim P}[Y] \in \meanspace_{1,r}\}$. 

By strict convexity of $D(\Pmv_{\mu} \| P_0)$ in $\mu$ and the nice-ness condition, we have that 
$\min_{\mu \in \meanspace_{1,r} \cap \meanspace} D(\Pmv_{\mu} \| P_0)$ exists and is achieved uniquely for a point $\mu^*(r)$ on the boundary $\bd(\meanspace_{1,r}) \cap \bd(\meanspace_1^{\comp})$. Every $r \in \cR$ is mapped to a point $f(r) \in \bd(\meanspace_1^{\comp})$ in this way, and the mapping $f: \cR \rightarrow \bd(\meanspace_1^{\comp})$ is injective since $\meanspace_{1,r} \cap \meanspace_{1,r'} = \emptyset$ for every $r \neq r'$ with $r, r' \in \cR$. Therefore we will simply {\em identify\/} $\cR$ with a subset of $\bd(\meanspace_1^{\comp})$, such that $f(\mu) = \mu$ for all $\mu \in \cR$.  For $P \in \cH_1$ with ${\mathbb E}_P[Y] = \mu$ (so $\mu \in \meanspace_1$), we will now define  $r(P)$ to be the $r \in \cR$ such that $P \in \cH_{1,r}$, i.e. such that $\mu \in \meanspace_{1,r}$. Note we can think of $r(P)$ either as an index of sub-hypothesis $\cH_{1,r}$ or as a special boundary point of the space of mean-values $\meanspace_{1,r}$. 

If we were to test $\cH_0$ vs. $\cH_{1,r}$ for given $r$, then we would still like to use the GROW e-variable $S_{r,\grow} = \bar{p}_{r}/p_0$.
In reality we do not know $r$, but we aim for an e-value that loses as little evidence as possible compared to $S_{r,\grow}$, in the worst-case over all $r$. 
Formally, we seek to find e-variable $S= q(Y)/p_0(Y)$, where  $q$ achieves 
\begin{align}\label{eq:redundancy}
 &   \sup_q \inf_{P 
    \in \cH_1}\ {\bf E}_{Y \sim P} \left[ \log \frac{q(Y)}{p_0(Y)} - \log \frac{\pmv_{r(P)}(Y)}{p_0(Y)} \ \right]= 
     \sup_q \
     \inf_{P 
    \in \cH_1}\ {\bf E}_{Y \sim P} \left[
     \log \frac{q(Y)}{\pmv_{r(P)}(y)} \ \right] \nonumber \\
     & = - \mmred(\cH_1) \text{\ where\ } \mmred(\cH_1) := \inf_q \sup_{P \in \cH_1}
     \left( D(P \|Q) - D(P \| \bar{P}_{r(P)}) \right).  \end{align}
where the supremum is over all probability densities on $Y$ and $r(P)$ is again the unique $r \in \cR \subset \meanspace_1$ such that $P \in \cH_{1,r}$. $\mmred(\cH)$  is easily shown to be nonnegative for any $\cH$, and both equations in (\ref{eq:redundancy}) are immediate. From the rightmost expression, information theorists will recognize the $q$ as minimizing the maximum {\em redundancy\/} \citep{CoverT91,ClarkeB94,takeuchi2013asymptotically}: the worst-case additional mean number of bits needed to encode the data  by an encoder who only knows that $P \in \cH_1$ compared to an encoder with the additional knowledge that $P \in \cH_{1,r}$.  

As said, it is easier to analyze a slight variation of this approach which makes at least as much sense:
rather than comparing ourselves to the inherently unknowable $r(P)$, we may consider the actually observed data $Y=y$ and compare ourselves to (i.e. try to obtain as much evidence against $P_0$ as possible compared to) the $\breve{r}(y)$ we would like to have used with hindsight, after seeing and in light of data $y$; and rather than optimizing an expectation under an imagined distribution which we will never fully identify anyway, we will optimize in the worst-case over all data.
The setup works for general functions $\breve{r}$, indicating what $r\in \cR$ we would have liked to use with hindsight; further below we discuss intuitive choices. 
We note that $\breve{r}$ maps data $\cY$ to a point in $\cR \subset \bd(\meanspace_1^{\comp}) \subset \meanspace_1$, so we can think of $\breve{r}$ as an {\em estimator\/} of parameter $\mu$; however, the estimator is restricted to a small subset of $\meanspace$, namely the set $\cR$.

Thus, 
we now seek to find e-variable $\Srel= q(Y)/p_0(Y)$, where  $q$ now achieves 
\begin{align}\label{eq:shtarkov}
&     \sup_q \inf_{y \in \cY}\ \left( \log \frac{q(y)}{p_0(y)} - \log \frac{\pmv_{\breve{r}(y))}(y)}{p_0(y)} \ \right)= 
     \sup_q \inf_{y \in \cY}\ \left( \log \frac{q(y)}{\pmv_{\breve{r}(y)}(y)} \ \right) \nonumber\\
     & = - \mmreg(\breve{r}) \text{\ where\ } \mmreg(\breve{r}) := \inf_q \sup_{y \in \cY}\ \left( - \log \frac{q(y)}{\pmv_{\breve{r}(y)}(y)} \ \right),
\end{align}
where again the supremum is over all probability densities that can be defined on $Y$, the quantity $\mmreg(\breve{r})$ is easily seen to be nonnegative regardless of how $\breve{r}$ is defined, and both equalities in (\ref{eq:shtarkov}) are immediate. 
From the rightmost expression, information theorists will recognize the optimizing $q$ as the $q$ minimizing {\em individual sequence regret\/}: it minimizes the `regret' in terms of the number of bits needed to encode the data, in the worst-case over all sequences, compared to somebody who has seen $\breve{r}(y)$ in advance; the word `regret' is also meaningful in our setting --- the aim is to minimize regret in the sense of loosing as little evidence as possible compared to the largest attainable  e-value (evidence) with hindsight. 
In information theory, neither the minimax redundancy nor the minimax individual sequence regret is considered inherently superior or more natural, and (as we shall also see in our context, in the next section), both quantities often behave similarly.

Indeed, (\ref{eq:shtarkov}) being a variation of a standard problem within information theory and sequential prediction with the logarithmic loss, it is well-known 
\citep{BarronRY98,CesaBianchiL06,grunwald2007minimum,GrunwaldM19} that the solution for $q$ is uniquely achieved by the following variation of the {\em Shtarkov distribution}, a notion going back to \cite{Shtarkov87}: \begin{equation}\label{eq:shtarkovb}
    q_{\shtarkov,\breve{r}}(y) = \frac{\pmv_{\breve{r}(y)}(y)}
    {\int_y \pmv_{\breve{r}(y))}(y)\nu(dy)}
    \ \text{so}\ \Srel = \frac{q_{\shtarkov}(Y)}{p_0(Y)}.
\end{equation}
We then get that $
\left( \log \frac{q(y)}{\pmv_{\breve{r}(y)}(y)} \ \right) = \log \int_y \pmv_{\breve{r}(y)}(y)\rho(dy) := 
\mmreg(\breve{r})
$
independently of $y$, where $\mmreg(\breve{r})$ is the {\em maximin regret\/} (called `minimax regret' originally, since in data compression the rightmost expression, without the minus sign, is the relevant one). 

The most straightforward choice is to take $\breve{r}(y):= \hat{r}(y)$ the maximum likelihood estimator within $\cR$, 
achieving 
\begin{equation}\label{eq:mle}
    \max_{r \in \cR} \bar{p}_{r}(y)
\end{equation}
for the given $y$,
since then $q_{\shtarkov}$ has minimal overhead compared to the  $S_{\grow,r}$ that is largest with hindsight, i.e. that provides the most evidence with hindsight --- thus providing additional justification for the terminology `regret' in this special case, which was also Shtarkov's (\citeyear{Shtarkov87}) original focus. If $\breve{r}$ is set to $\hat{r}$, then  $Q_{\shtarkov}$ is also known as the {\em normalized maximum likelihood (NML)\/} distribution relative to the set $\cR \subset \bd(\meanspace_1^{\comp})$ (not relative to the full exponential family $\cE$!).

In the ensuing results we will mostly be interested in the case that $\cR$ is either finite (as in Figure~\ref{fig:starA}), or that it is in 1-to-1 correspondence with $\bd(\meanspace^{\comp}_1)$ (as in Figure~\ref{fig:starB}). In both cases, the maximum in (\ref{eq:mle}) is achieved, although it may be achieved for more than one $r$; in that case, we set $\hat{r}$ to be the largest $r$ achieving (\ref{eq:mle}) in lexicographical ordering, making $\hat{r}$ well-defined in all cases.

While the upcoming analogue of the CSC theorem will mention the MLE $\hat{r}$, it turns out that in the proof, and in the detailed theorem statement, we also need to refer to an estimator  $\breve{r}$  that may differ from the MLE. The reason is that, intriguingly, in general we may have that for some $r \in \cR$ and  $y \in \meanspace_{1,r}$ we may have that $\hat{r}(y) \neq r$,  which complicates the picture. To this end, we formulated the minimax regret approach for general $\breve{r}: \conv(\cY) \rightarrow {\cal R}$, and not just the MLE, as was done earlier e.g. by \cite{GrunwaldM19}.  

\begin{example}{\bf [Gaussian Example]}
{\rm     We use the simple 2-dimensional Gaussian case  to gain intuition.
    Thus, we let $\cY = \conv(\cY) = \reals^2$, and $P_0$ a 2-dimensional Gaussian distribution on $Y$, with mean $0$ (i.e., $(0,0)$) and  $\Sigma$ a positive definite $2 \times 2$ covariance matrix. Then $\meanspace = \reals^2$ and 
    $\cE$ is a 2-dimensional Gaussian location family. For now, take $\Sigma$ to be the identity matrix. Then $D(P_0 \| P_{\mu}) = (1/2) \| \mu \|_2^2$ is simply the squared norm of $\mu$, facilitating the reasoning.   
    A simple case of a convex $\meanspace_1$ is the translated half-space $\meanspace_1 = \{ (y_1,y_2): y_1 \geq a\}$ for constant $a > 0$. The point  $\mu \in \meanspace_1$ minimizing KL divergence to $P_0$ must then clearly be $(a,0)$. Therefore, if $\cH_1$ is any set of distributions with means in $\meanspace_1$ and containing $\cE_1^{\bd} = \{ \bar{P}_{\mu}: \mu \in \bd(\meanspace_1^{\comp})\}$, we have by Proposition~\ref{prop:exp} that 
    $S_{\grow} = \bar{p}_{(a,0)}(Y)/\bar{p}_{(0,0)}(Y)$.
    We see that even if we have convex $\cH_1$, the minimax individual sequence regret approach leads to a different e-variable, if we carve up $\cH_1$ into $\{\cH_{1,r}: r \in \cR\}$ for $\cR$ with more than one element. For example, we can take $\cR= \{(a,\mu_2): \mu_2 \in \reals\}$ be a vertical line and let $\meanspace_{1,(a,\mu_2)}$ be the subset of $\meanspace_1$ on the line connecting $(0,0)$ with $(a,\mu_2)$. Then, with $\hat{r}$ the MLE as in (\ref{eq:mle}), we get that $\hat{r}(y)$ is the point where the line $\cR$ intersects with the line connecting $(0,0)$ and $y$. Since now $\hat{r}(y)$, and hence the sub-hypothesis $\cH_{1,\hat{r}(y)}$ we want to be almost GROW against, changes with $y$, we get a solution $\Srel$ in (\ref{eq:shtarkovb}) that differs from $S_{\grow}$.

In the case of convex $\cH_1$, whether to use the absolute or relative GROW e-variable may depend on the situation (see \cite{GrunwaldHK19} for a motivation of when absolute or relative is more appropriate). In the case of nonconvex $\cH_1$ with $d > 1$, we simply do not know how to characterize the absolute GROW and we have to resort to the relative GROW. }
\end{example}

\subsection{CSC Theorem for surrounding $\cH_1$}\label{sec:inequality_theorem}
For given estimator $\breve{r}$ and probability density $q$ on $Y$, define $\reg(q,\breve{r})$ (`regret') as
\begin{equation}\label{eq:regret}
\reg(q,\breve{r},y) := \log  \left( 
\frac{\pmv_{\breve{r}(y)}(y)}{q(y)}
\right) \ ; \ \mreg(q,\breve{r}) := \sup_{y: y \in \meanspace_1}
\log  \left( 
\frac{\pmv_{\breve{r}(y)}(y)}{q(y)}
\right).
\end{equation}
Whereas above we discussed an MLE $\breve{r}$, i.e. $\breve{r}=\hat{r}$, we now require it to be self-consistent, i.e. we set it to be any function of $y$ such that 
for all $y \in \meanspace_1$, we have $y \in \meanspace_{1,\breve{r}(y)}$. 
The value of $r(y)$ for $y \in \conv(\cY) \setminus \meanspace_1$ will not affect the result below.


\begin{theorem}{\rm \bf [CSC Theorem for surrounding $\cH_1$]}
    \label{thm:nml}
    Suppose that $\meanspace_1$ is nice and let $ \{ \meanspace_{1,r}:  r \in \cR \}$ be any nice partition of $\meanspace_1$ as in (\ref{eq:nicepart}), with $\breve{r}$ any self-consistent estimator as above. Let $q$ be an arbitrary probability density function. Then: 
    $$
{\mathbb E}_{P_{0}}\left[ {{\bf 1}_{Y \in \meanspace_1}}\cdot e^{D(\Pmv_{\breve{r}(Y))} \| P_0) - \reg(q,\breve{r},Y)}\right] \leq 1.
$$  
so that in particular,  with  $\underline{D}:=  \inf_{\mu \in \bd(\meanspace_1^{\comp})} D(P_{\mu}\| P_{0})$,
\begin{equation}\label{eq:mlesharpbound}
P_{0}\left( Y\in \meanspace_1
\right)\leq  e^{\mreg(q,\breve{r}) - \underline{D}} \overset{(*)}{=}
e^{\mmreg(\breve{r})- \underline{D}}
\overset{(**)}{\leq} e^{\mmreg(\hat{r})- \underline{D}}
,
\end{equation}
where $(*)$ holds if we take $q= q_{\shtarkov,\breve{r}}$,
and $(**)$ holds if the MLE estimator $\hat{r}$ is well-defined. 
\end{theorem}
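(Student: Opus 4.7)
The plan is to reduce the claim to a pointwise application of the local GROW inequality on each convex piece $\meanspace_{1,r}$ of the partition, glue these together via the self-consistency of $\breve{r}$, and then use that any probability density $q$ yields an e-variable $q/p_0$ under $P_0$.

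First, for each $r \in \cR$, the point $r = \mu^*(r)$ lies in $\meanspace_{1,r} \cap \meanspace$ (niceness of $\meanspace_1$ and of the partition places it on $\bd(\meanspace_1^{\comp}) \subset \meanspace$) and uniquely minimizes the strictly convex map $\mu \mapsto D(\bar{P}_\mu \| P_0)$ over the closed convex set $\meanspace_{1,r}$. First-order optimality, together with the standard exponential-family identity that the gradient of $D(\bar{P}_\mu\|P_0)$ in $\mu$ equals $\theta(\mu)$, gives $\theta(r)^{\top}(y-r) \geq 0$ for every $y \in \meanspace_{1,r}$. Since $\log(\bar{p}_r(y)/p_0(y)) = \theta(r)^{\top} y - \log Z(\theta(r))$ and $D(\bar{P}_r\|P_0) = \theta(r)^{\top} r - \log Z(\theta(r))$, this rearranges to the pointwise e-value lower bound
$$ y \in \meanspace_{1,r} \ \Longrightarrow\ \log \frac{\bar{p}_r(y)}{p_0(y)} \geq D(\bar{P}_r \| P_0),$$
which is just the pointwise inequality~(\ref{eq:subset}) of Theorem~\ref{thm:old} applied to the convex set $\meanspace_{1,r}$ in place of $\meanspace_1$.

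Second, self-consistency of $\breve{r}$ gives $y \in \meanspace_{1,\breve{r}(y)}$ for every $y \in \meanspace_1$, so specializing the above with $r = \breve{r}(y)$ yields $\log(\bar{p}_{\breve{r}(y)}(y)/p_0(y)) \geq D(\bar{P}_{\breve{r}(y)}\|P_0)$ on $\meanspace_1$. Using $\reg(q,\breve{r},y) = \log(\bar{p}_{\breve{r}(y)}(y)/q(y))$, this rewrites as
$$\frac{q(y)}{p_0(y)} \geq e^{D(\bar{P}_{\breve{r}(y)}\|P_0) - \reg(q,\breve{r},y)} \quad \text{for } y \in \meanspace_1.$$
Multiplying by $\mathbf{1}_{Y \in \meanspace_1}$, taking expectation under $P_0$, and using that $\mathbb{E}_{P_0}[q(Y)/p_0(Y)] = 1$ since $q$ is a probability density, immediately yields the first displayed inequality of the theorem. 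For (\ref{eq:mlesharpbound}) I would then substitute the uniform bounds $D(\bar{P}_{\breve{r}(Y)}\|P_0) \geq \underline{D}$ (valid since $\breve{r}(Y) \in \cR \subset \bd(\meanspace_1^{\comp})$) and $\reg(q,\breve{r},Y) \leq \mreg(q,\breve{r})$ on $\{Y \in \meanspace_1\}$, then rearrange. For the identity $(*)$, plugging in $q = q_{\shtarkov,\breve{r}}$ makes $\reg(q,\breve{r},y)$ constant in $y$, equal to the Shtarkov log-normalizer $\log \int \bar{p}_{\breve{r}(y)}(y)\,d\rho(y)$, which by definition equals $\mmreg(\breve{r})$. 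For $(**)$, the MLE condition $\bar{p}_{\hat{r}(y)}(y) \geq \bar{p}_{\breve{r}(y)}(y)$ holds pointwise in $y$, so by monotonicity of the Shtarkov normalizer, $\mmreg(\breve{r}) \leq \mmreg(\hat{r})$.

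The only non-routine ingredient is recognizing that self-consistency of $\breve{r}$ is precisely the condition needed to patch the local pointwise GROW bounds across the partition: without it, one could not assert that the piece $\meanspace_{1,\breve{r}(y)}$ actually contains $y$, and the exponential factor $e^{D(\bar{P}_{\breve{r}(Y)}\|P_0)}$ inside the expectation would not be valid. Everything else is algebraic rearrangement combined with standard properties of exponential families, the Shtarkov distribution, and the MLE.
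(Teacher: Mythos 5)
Your proof is correct and takes essentially the same route as the paper: establish the pointwise bound $\bar{p}_{\breve{r}(y)}(y)/p_0(y)\geq e^{D(\bar{P}_{\breve{r}(y)}\|P_0)}$ for $y\in\meanspace_1$ via self-consistency and convexity of each cell $\meanspace_{1,\breve{r}(y)}$, then integrate against $P_0$ using that $q(Y)/p_0(Y)$ has expectation at most one, with the Shtarkov choice of $q$ and the MLE comparison handling $(*)$ and $(**)$. The only cosmetic difference is that you justify the pointwise cell inequality directly through first-order optimality of the KL minimizer on the convex cell (using $\nabla_\mu D(\bar{P}_\mu\|P_0)=\theta(\mu)$), whereas the paper obtains the same inequality by repeating the affine-in-$y$ rearrangement from the proof of Theorem~\ref{thm:old}; the two derivations are equivalent.
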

\begin{proof}
Folllowing precisely analogous steps as in the proof of (\ref{eq:subset}) as based on (\ref{eq:meanimp}) within the proof of Theorem~\ref{thm:old}, we obtain, for all $y \in \conv(\cY)$,
\begin{equation}\label{eq:subsetb}
y \in \meanspace_1 \Rightarrow \frac{\pmv_{\breve{r}(y)}(y)}{p_0(y)} \geq  e^{D(\Pmv_{\breve{r}(y)} \| P_0)}.
\end{equation}
Then (\ref{eq:subsetb}) gives, using definition  (\ref{eq:regret}):
\begin{align*}
 &  {\mathbb E}_{P_{0}}\left[ {{\bf 1}_{Y \in \meanspace_1}}\cdot {e^{D(\Pmv_{\breve{r}(Y)} \| P_0) - \reg(q,\breve{r},Y)}}\right] \leq 
  {\mathbb E}_{P_{0}}\left[ {{\bf 1}_{Y \in \meanspace_1}}\cdot 
 \frac{\pmv_{\breve{r}(Y)}(Y)}{p_0(Y)}  \cdot 
  {e^{- \reg(q,\breve{r},Y)}}\right] \leq \\ &
 {\mathbb E}_{P_{0}}\left[ {{\bf 1}_{Y \in \meanspace_1}}\cdot 
 \frac{q(Y)}{p_0(Y)}  \cdot e^{ \reg(q,\breve{r},Y)} \cdot 
  {e^{-  \reg(q,\breve{r},Y)}}\right] \leq 
   {\mathbb E}_{P_{0}}\left[ 
 \frac{q(Y)}{p_0(Y)} \right]= 1,
\end{align*}
which proves the first statement in the theorem. The second statement is then immediate. 
\commentout{Combining the two displays, we have: 
\begin{align*}
P_{\theta_0}\left( \hat\theta \in \Theta_1
\right) 
&\leq  
P_{\theta_0}\left( {\bf E}_{\hat\theta} \left[ \log 
\frac{p_{g(\hat\theta)}(Y)}{p_{\theta_0}(Y)} \right] \geq   
D(P_{g(\hat\theta)} \| P_{\theta_0})
\right) \\
&\leq P_{\theta_0}\left( {\bf E}_{\hat\theta} \left[ \log 
\frac{p_{g(\hat\theta)}(Y)}{p_{\theta_0}(Y)} \right] \geq  \underline{D}
\right) \\ 
&\stackrel{(a)}{=} P_{\theta_0} \left(   \log 
\frac{p_{g(\hat\theta)}({Y})}{p_{\theta_0}({Y})}  \geq  \underline{D}
\right)
\end{align*}
where $(a)$ holds by (\ref{eq:robustnessb}).

Then proceeding from above, 
\begin{align*}
P_{\theta_0}\left( \hat\theta \in \Theta_1
\right)
&\leq P_{\theta_0}\left(   \log 
\frac{q({Y})}{p_{\theta_0}({Y})} + \reg(q)  \geq \underline{D}
\right) \\ 
&= P_{\theta_0}\left(   
\frac{q({Y})}{p_{\theta_0}({Y})}  \geq  e^{-\reg(q) + \underline{D}} 
\right) \\ 
&\stackrel{(b)}{\leq} {\bf E}_{\theta_0} \left[ \frac{q({Y})}{p_{\theta_0}({Y})}\right] \cdot e^{\reg(q) -\underline{D}}= e^{\reg(q) - \underline{D}}
\end{align*}
where $(b)$ holds by Markov's inequality.

We have thus shown, in analogy to Theorem~\ref{thm:old}, Part 2:
}
\end{proof}
\paragraph{Analyzing the CSC Result --- Different Partitions $\{\meanspace_{1,r}: r \in \cR\}$}
The next question is how to cleverly partition any given nonconvex $\meanspace_1$ so as to get a good bound when applying Theorem~\ref{thm:nml}.
We first note that, for any given $\meanspace_1$,  the final bound (\ref{eq:mlesharpbound}) 
does not worsen if we enlarge $\meanspace_1$, as long as 
 $\underline{D}=  \inf_{\mu \in \bd(\meanspace_1^{\comp})} D(P_{\mu}\| P_{0})$ stays the same. Thus, we still get the same bound if we shrink the  complement $\meanspace_1^{\comp}$ to the $\underline{D}$-KL ball $\{ \mu: D(\Pmv_{\mu} \|P_0) < \underline{D} \}$, without making the bound looser. We will therefore, from now on, simply assume that we are in the situation in which $\meanspace_1^{\comp}$ is the $\underline{D}$-KL ball. 
 We know that such a KL ball is convex \citep{Brown86} --- with such a convex  $\meanspace_1^{\comp}$ we are thus in the `dual case', as it were, to the case of convex $\meanspace_1$ which we discussed in Section~\ref{sec:convexm1}.  

There are now basically two approaches to apply the CSC Theorem~\ref{thm:nml} that suggest themselves. In the first approach, we first determine a larger $\meanspace'_1$ 
(hence smaller $\meanspace^{'\comp}_1$) that contains $\meanspace_1$, such that $\meanspace'_1$ can be partitioned into a finite number $|\cR'|$ of convex subsets, $\{ \meanspace'_{1,r}: r \in \cR \}$, and then we apply Theorem~\ref{thm:nml} to $\meanspace'_1$ and $\cR'$.
We could, for example, take $\meanspace_1^{'\comp}$ be a convex polytope with a finite number of corners, all touching $\bd(\meanspace_1^{\comp})$. Such a situation is depicted in Figure~\ref{fig:starA}, if we interpret the dashed curve the boundary of a KL ball and the $\meanspace^{\comp}_1= \meanspace \setminus \meanspace_1$ in the Figure as the polytope $\meanspace_1^{'\comp}$. 
\commentout{
\begin{example}{\bf [Gaussian Example, Continued]}
    {\rm Consider the 2-dimensional Gaussian case, with identity covariance matrix. 
    Then the $\underline{D}$-KL ball coincides with the Euclidean ball of radius $\underline{D}^{1/2}$ around $0$ and $\bd(\meanspace_1^{\comp})$ is a circle. In the first approach, we could take any $\meanspace'_1$ such that $\meanspace_1^{'\comp}$ is a regular polygon with $k \geq 3$ corners that is circumscribed by the circle, i.e. it touches the circle at its corners. Then we would take an $\cR$ with $k$ elements, one for each edge. We set each $r \in \cR$ equal to a midpoint of the corresponding edge, set $\meanspace_{r}$ equal to the cone from $0$ with sides going through the endpoints of this edge, and $\meanspace_{1,r}:= \meanspace_1 \cap \meanspace_r$. Then, for any $P \in \cH_1$ with ${\mathbb E}_P[Y]= \mu$, we have $r(P)$ is equal to the point in the cone containing $\mu$ that minimizes both Euclidean distance and KL divergence to $0$.
    }
\end{example}}
In the second approach, we set $\cR= \bd(\meanspace_1^{\comp})$, making it a manifold in $\reals^d$, and set $$
\meanspace_{1,r} := \{ \mu \in \meanspace_1: \alpha \mu = r \text{\ for some\ } \alpha > 0 \},$$
i.e. the set of points in $\meanspace_1$ on the ray starting at $0$ and going through $r$. Then we have for all $y \in \bd(\meanspace_1^{\comp})$ that  $r(y) = y$.
We may think of this second approach as a limiting case of the first one, when we let the number of corners of the polytope go to infinity. In the next section we show that, if we apply the CSC Theorem~\ref{thm:nml}, in our main case of interest, with $Y=n^{-1} \sum_{i=1}^n X_i$, and $\meanspace_1^{\comp}$ a KL Ball, then for large $n$, the second, `continuous' approach always leads to better bounds than the first.

\section{Asymptotic expression of growth rate and regret}\label{sec:asymptotic_growth_rate}
While the exact sizes of $\mmred(\cH_1)$ and $\mmreg(\breve{r})$  are   hard to determine, for the case of nice $\meanspace_1^{\comp}$ and our central case of interest, with $Y= n^{-1} \sum X_i$, we can use existing results to obtain relatively sharp asymptotic (in $n$) approximations of  $\mmred(\cH_1)$ and $\mmreg(\breve{r})$'s  upper bound $\mmreg(\hat{r})$. We now derive these approximations and show how they, in turn, lead to an approximation to $\grow$ if moreover $\meanspace_1^{\comp}$ is a KL ball, which as explained above, is also the case of central interest. 

Thus, we now assume that $Y := n^{-1} \sum X_i$, with $X, X_1, X_2, \ldots$ i.i.d. $\sim P'_0$, where $P'_0$ is a distribution on $X$, inducing distribution $P_0 \equiv P_0^{[Y]}$ for $Y$.  Like $P_0$, we have that 
$P'_0$ also generates an exponential family, now with sufficient statistic $X$ and densities
\begin{equation}\label{eq:trump}
p'_{\theta}(x) \propto e^{\theta^{\top}x } p'_0(x)
\end{equation}
extended to $n$ outcomes by assuming independence. 
Now ${\mathbb E}_{P'_{\theta}} [Y]= 
n^{-1} {\mathbb E}_{P'_\theta} [\sum_{i=1}^n X_i] = 
{\mathbb E}_{P'_{\theta}} [X]
$
so that the mapping $\mu(\theta)$ from canonical to mean-value parameter is the same for both the family (\ref{eq:trump}) and the original family $\{P_{\theta}(Y): \theta \in \Theta \}$, and the  likelihood ratio of any member $P'_{\theta}$ of the family (\ref{eq:trump}) to $P'_0$ on $n$ outcomes is given by, with $\mu = \mu(\theta)$, 
$$
\prod_{i=1}^n \frac{\bar{p}'_{\mu}(X_i)}{p'_0(X_i)} = \frac{p'_{\theta}(X^n)}{p'_0(X^n)}
= \frac{p_{\theta} (Y)}{p_0(Y)} = \frac{\bar{p}_{\mu}(Y)}{p_0(Y)},
$$
which in turn implies that 
$D(\bar{P}_{\mu} \| P_0) = n D(\bar{P}'_{\mu} \| P'_0 )$, where $ D(\bar{P}'_{\mu} \| P'_0 )$ is an expression that does not change with $n$. 
This means that if we keep $\meanspace_1$ constant, the $\underline{D}= \inf_{\mu \in \meanspace_1} D(\bar{P}_{\mu} \| P_0)$ in (\ref{eq:mlesharpbound}) increases linearly in $n$. 
To avoid confusion here, it is useful to make explicit the dependency of $\mmreg$ and $\underline{D}$ on $n$ in Theorem~\ref{thm:nml}, by writing it as $\mmreg_n$ and $\underline{D}_n$: we can then restate the bound (\ref{eq:mlesharpbound}) in the theorem as
\begin{equation}\label{eq:mlesharpboundB}
P_{0}\left( Y\in \meanspace_1
\right)\leq  
e^{\mmreg_n(\hat{r})- \underline{D}_n}=
e^{\mmreg_n(\hat{r})- n \underline{D}_1}.
\end{equation}
Thus, as $n$ increases, if we keep $\meanspace_1$ fixed, then the quantity $\underline{D}_n$ in the bound increases linearly in $n$. On the other hand, we will now show that, for sufficiently smooth boundaries of $\meanspace_1$, we have that $\mmreg_n(\hat{r})$ only increases logarithmically in $n$, making the strength of  bound (\ref{eq:mlesharpbound}) still grow exponentially  in $n$. The result is a direct corollary of a result by \cite{takeuchi2013asymptotically}. 
\begin{theorem}\label{thm:TakeuchiB}{\bf 
[Corollary of Theorem 2 of \cite{takeuchi2013asymptotically}; see also \cite{takeuchi1997asymptotically}]}
Consider the setting of surrounding $\meanspace_1$ and suppose that $\meanspace_1$ is {\em nice}, as defined in the beginning of Section~\ref{sec:surrounding},
and that there exists a bijective function $\phi: \cU \rightarrow \bd(\meanspace_1^{\comp})$ so that $\cU$ is a subset of $\reals^{d-1}$ with open interior, $\phi$ has at least four derivatives and these are bounded on $\cU$.
Then $\hat{r}$ is well-defined and there exists $C > 0$ such that for all  $n$,
$$\mmreg_n(\hat{r}) \leq  \frac{d-1}{2} \log n + C.$$
\end{theorem}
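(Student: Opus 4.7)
The plan is to recognize the minimax regret in question as precisely the quantity whose asymptotics Takeuchi-Barron analyze for a $(d-1)$-dimensional smooth parametric family, namely the curved sub-family of $\cE$ obtained by restricting to the boundary $\bd(\meanspace_1^{\comp})$ via the chart $\phi$, after which their result applies almost directly.

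First I would rewrite $\mmreg_n(\hat{r})$ via the Shtarkov/NML identity (\ref{eq:shtarkovb}) applied to $n$ i.i.d.\ outcomes:
\begin{equation*}
\mmreg_n(\hat{r}) \;=\; \log \int_{\cX^n} \max_{u \in \cU} \prod_{i=1}^n \bar{p}'_{\phi(u)}(x_i) \; d\rho^n(x^n).
\end{equation*}
Under the reparametrization $r = \phi(u)$, this is precisely the minimax individual-sequence regret for the parametric family $\cF := \{P'_{\phi(u)} : u \in \cU\}$ based on $n$ i.i.d.\ observations. The MLE $\hat{r}$ within $\cR$ corresponds to $\hat{u}(x^n) := \phi^{-1}(\hat{r}(x^n))$, the MLE of the parameter $u$ of $\cF$. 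The family $\cF$ is a $(d-1)$-dimensional curved exponential family embedded in the full $d$-dimensional family $\cE$; its log-likelihood inherits smoothness from the analyticity of $\cE$ in its natural parameter and from the four bounded derivatives of $\phi$, and a short chain-rule calculation gives its Fisher information
\begin{equation*}
I_{\cF}(u) \;=\; J(u)^{\top}\, I_{\cE}(\phi(u))\, J(u),
\end{equation*}
where $J(u)$ is the $d \times (d-1)$ Jacobian of $\phi$ and $I_{\cE}(\mu)$ is the Fisher information of $\cE$ at mean-value parameter $\mu$.

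Next I would invoke the Takeuchi-Barron asymptotic expansion of NML (Theorem 2 of \cite{takeuchi2013asymptotically}; see also \cite{ClarkeB94,takeuchi1997asymptotically}), which states that for any parametric family of i.i.d.\ models satisfying standard regularity (smoothness of the log-likelihood to fourth order with suitably bounded derivatives, nondegenerate Fisher information, and integrability of $\sqrt{\det I}$ over the parameter space),
\begin{equation*}
\mmreg_n(\hat{r}) \;=\; \frac{d-1}{2} \log \frac{n}{2\pi} \;+\; \log \int_{\cU} \sqrt{\det I_{\cF}(u)}\; du \;+\; o(1),
\end{equation*}
from which the upper bound $\mmreg_n(\hat{r}) \leq \tfrac{d-1}{2} \log n + C$ follows for some finite $C$ as soon as the integral is finite and the $o(1)$ term is absorbed.

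The bulk of the work --- and the main obstacle --- is verifying that $\cF$ falls within Takeuchi-Barron's regularity framework and that $\int_{\cU} \sqrt{\det I_{\cF}(u)}\, du$ is finite. For the latter, niceness of $\meanspace_1$ (Section~\ref{sec:surrounding}) puts $\bd(\meanspace_1^{\comp})$ compactly inside the interior of $\meanspace$, so $I_{\cE}$ is continuous, bounded and positive definite on a neighbourhood of $\phi(\cU)$; combined with boundedness of $J$ (from bounded derivatives of $\phi$) and the fact that $\cU$, being in smooth bijection with the compact manifold $\bd(\meanspace_1^{\comp})$, has finite Lebesgue measure, the integral is finite. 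For the former, the uniform lower bound on $I_{\cF}(u)$ relies on $\phi$ being a proper immersion, so that $J(u)$ has full column rank uniformly --- this is where the smoothness and bijectivity hypotheses in the theorem pay off, and it is the step one must check carefully when instantiating Takeuchi-Barron. Absorbing the $o(1)$ error and the $-\tfrac{d-1}{2}\log(2\pi)$ correction, as well as any slack needed to validate the bound uniformly in small $n$, into a single constant $C$ then yields the stated inequality.
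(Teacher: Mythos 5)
Your proposal follows essentially the same route as the paper: identify $\mmreg_n(\hat{r})$ with the minimax individual-sequence (NML) regret of the $(d-1)$-dimensional boundary family $\{\bar{P}_{\phi(u)}: u \in \cU\}$ and invoke the Takeuchi--Barron asymptotic regret result, whose regularity hypotheses hold because $\bd(\meanspace_1^{\comp})$ sits compactly inside the interior of $\meanspace$ and $\phi$ has four bounded derivatives. The paper states the theorem as a direct corollary without spelling out this verification, so your added detail (Fisher information of the curved subfamily, finiteness of $\int_{\cU}\sqrt{\det I_{\cF}(u)}\,du$) is fully compatible; note only that the upper-bound direction of the expansion is all that is needed for the claimed inequality.
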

We also have a bound on the minimax regret for the case that $\cH_1$ contains $\cE_1^{\bd}$, the subset of exponential family $\cE$ restricted to the boundary $\bd(\meanspace_1^{\comp})$: 
\begin{theorem}\label{thm:ClarkeB}{\bf 
[Corollary of Theorem 1 of \cite{ClarkeB94}]}
Consider the setting and conditions of the previous theorem.
Let $\cE_1^{\bd} := \{ \bar{P}_{\mu}: \mu \in \bd(\meanspace_1^{\comp}) \cap \meanspace \}$. 
Then there exists $C' < 0$ such that for all  $n$,
$$\mmred_n(\cE_1^{\bd}) = \inf_q \sup_{ \mu \in \bd(\meanspace_1^{\comp})}
      D(\bar{P}_{\mu} \|Q)    \geq  \frac{d-1}{2} \log n + C' .$$
\end{theorem}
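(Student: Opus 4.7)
The plan is to recognize $\cE_1^{\bd}$ as a smooth $(d-1)$-dimensional i.i.d.\ statistical family and then invoke Clarke and Barron's classical asymptotic lower bound on minimax redundancy. First I would use the bijection $\phi : \cU \to \bd(\meanspace_1^{\comp})$ to view $\cE_1^{\bd}$ as the curved sub-family $\{\bar{P}_{\phi(u)} : u \in \cU\}$ indexed by a $(d-1)$-dimensional real parameter $u$. Because $\phi$ is $C^4$ with bounded derivatives, $\bd(\meanspace_1^{\comp})$ lies in the interior of $\meanspace$, and exponential families are analytic in their mean-value parameter, the pulled-back Fisher information matrix
\begin{equation*}
I_\phi(u) \;=\; D\phi(u)^{\top}\, I_{\cE}(\phi(u))\, D\phi(u)
\end{equation*}
is continuous, bounded, and positive definite on $\cU$. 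This shows the sub-family meets the smoothness, identifiability and nondegeneracy hypotheses required by Clarke--Barron.

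Next, I would apply \cite{ClarkeB94}'s Theorem 1, which for a sufficiently smooth $k$-dimensional i.i.d.\ parametric model over a compact parameter set $K$ with non-empty interior establishes
\begin{equation*}
\inf_Q \sup_{u \in K}\, D(\bar{P}_{\phi(u)} \| Q) \;=\; \frac{k}{2}\log\frac{n}{2\pi e} \;+\; \log \int_{K} \sqrt{\det I_\phi(u)}\, du \;+\; o(1).
\end{equation*}
Specializing to $k = d-1$ and choosing any compact $K \subset \cU$ with non-empty interior, and noting that restricting the supremum to $K$ only lowers minimax redundancy so the same expression is a \emph{lower} bound on $\mmred_n(\cE_1^{\bd})$, yields, for all sufficiently large $n$, a bound of the form $\tfrac{d-1}{2}\log n + \tfrac{d-1}{2}\log(2\pi e)^{-1} + \log \int_{K} \sqrt{\det I_\phi(u)}\, du + o(1)$. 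Collecting the three finite constants, and then decreasing the result further if needed to dominate the $o(1)$ remainder uniformly and to accommodate the finitely many small $n$, gives a single constant $C'$ that works for all $n$. The claim $C' < 0$ is automatic: the $\tfrac{d-1}{2}\log(2\pi e)^{-1}$ contribution is negative for $d \geq 2$, and if necessary $K$ can be chosen of small volume so that $\log \int_{K} \sqrt{\det I_\phi}\,du$ is as negative as needed.

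The main obstacle will be verifying Clarke--Barron's regularity assumptions for the curved sub-family $\{\bar{P}_{\phi(u)} : u \in \cU\}$, which is not itself a full exponential family. Their proof proceeds by a Laplace approximation to the Bayes mixture density around the maximum-likelihood estimator and requires uniform control of the first four log-likelihood derivatives together with uniform positive-definiteness of the Fisher information; the $C^4$-bounded hypothesis on $\phi$, together with the analyticity of $\cE$ on the interior of $\meanspace$ (where $\bd(\meanspace_1^{\comp})$ is assumed to sit) and the compactness of $\bd(\meanspace_1^{\comp})$, is precisely tailored to propagate these properties from the ambient family to the pulled-back family on $\cU$. A minor bookkeeping step is that Clarke--Barron's theorem is customarily stated for compact parameter sets with non-empty interior, whereas $\cU$ need not be compact; the standard remedy is to pass to a compact inner set $K$, which weakens the lower bound only by an absolute constant and therefore leaves the leading $\tfrac{d-1}{2}\log n$ intact, as required.
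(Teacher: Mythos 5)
Your proposal is correct and follows essentially the same route as the paper: view $\cE_1^{\bd}$ as the $(d-1)$-dimensional family $\{\bar{P}_{\phi(u)}: u \in \cU\}$, check Clarke--Barron's regularity requirements using the smoothness of $\phi$ and the fact that $\bd(\meanspace_1^{\comp})$ sits compactly inside the interior of $\meanspace$, and then read off the $\frac{d-1}{2}\log n + O(1)$ lower bound from their Theorem 1. The only cosmetic difference is that the paper verifies their Conditions 1--3 explicitly (via boundedness of the derivatives of $\theta(\cdot)$ and $\phi$ and interchange of differentiation and integration from Brown), while you argue through the pulled-back Fisher information and a compact inner parameter set, which amounts to the same thing.
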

\begin{quote}
To see how this result follows from Clarke and Barron's, note that for this we need to verify the regularity Conditions 1--3 in Section 2 of their paper. the assumption of niceness implies that  $\bd(\meanspace_1^{\comp})$ is contained in a compact subset $\meanspace'$ of the interior of $\meanspace$. Then also the corresponding natural parameters are contained in a compact subset $\Theta'$ of the interior of $\Theta$. 
Condition 1 of their paper is immediately verified for parameters in the natural parameterization $\Theta$. Since the functions $\theta: \meanspace \rightarrow \Theta$ and $\phi: \cU \rightarrow \meanspace$ and their first and second derivatives are themselves bounded on $\cU$ and $\meanspace'$, Condition 1 of their paper is verified in terms of the relevant parameterization $\bar{P}_{\phi(u)}$ as well. Moreover,  for all $k \in \naturals$, all partial derivatives of form $\partial^k/(\partial u_{j_1} \ldots \partial u_{j_k}) \int \bar{p}_{\phi(u)}(y) d \rho(y)$, with $j_1, \ldots, j_k \in \{1, \ldots, d-1\}$, can be calculated by exchanging differentiation and integration (this follows from  \citep[Theorem 2.2]{Brown86}. Since we already established \cite{ClarkeB94}'s Condition 1, this implies that their Condition 2 also holds, as they explain underneath their Condition 2. Their Condition 3 is immediate. 
\end{quote}
Now, for any $\cH_1$ 
that contains $\cE_1^{\bd}$, Theorem~\ref{thm:ClarkeB} implies that 
\begin{equation}\label{eq:eleven}
\mmred_n(\cH_1) \geq \mmred_n(\cE_1^{\bd}) \geq  \frac{d-1}{2} \log n + C',
\end{equation}
and it is also immediate, by definition of $\hat{r}$, that
\begin{equation}\label{eq:twelve}
\mmreg_n(\hat{r}) \geq \mmred_n(\cH_1). 
\end{equation}
Together with (\ref{eq:eleven}) and (\ref{eq:twelve}), Theorem~\ref{thm:TakeuchiB} above now gives that, under the assumption of both theorems,
\begin{align}\label{eq:dminusone}
\mmred_n(\cH_1) = 
\mmreg_n(\hat{r}) + O(1) = 
\frac{d-1}{2} \log n + O(1).
\end{align}

\paragraph{\bf How to Partition $\meanspace_1$, Continued}
We now restrict to the case that $\meanspace_1^{\comp}$ is a $\underline{D}_1$-KL ball. At the end of previous section we explained why this is the major case of interest. 
As above, we want to keep $\meanspace_1^{\comp}$ fixed as $n$ increases, i.e. the set of parameters that stays in  $\meanspace_1$ does not change with $n$. This means that, when viewed as a KL ball of distributions on $X$, the radius of the ball remains constant with $n$, but when viewed as a KL ball of distributions on $Y$, the radius of the ball does need to scale linearly with $n$, i.e. we set:
\begin{equation}\label{eq:klball}
\meanspace_1^{\comp} =  \{ \mu: D(\bar{P}'_{\mu} \| P'_0) <  \underline{D}_1 \} = \{ \mu: D(\bar{P}_{\mu} \| P_0) <   n \underline{D}_1 \}.\end{equation}
We now return to the two approaches to applying the CSC Theorem~\ref{thm:nml} which we discussed at the end of the previous section: one based on a finite $\cR$ defining a polytope, one based on a `continuous' $\cR = \bd(\meanspace_1^{\comp})$.
It turns out that if $\meanspace_1$ is defined in terms of a KL ball (\ref{eq:klball}), then, for large $n$, it is always better to take the continuous $\cR$ approach. To see this, 
suppose that, in the polytope approach, we take a polytope $\meanspace_1^{'\comp}$ with $k$ corners (e.g., $k=5$ in Figure~\ref{fig:starA}); let $\breve{r}_{k}$ be the corresponding estimator and $\hat{r}_{k}$ be the corresponding MLE in $\cR$ and $\underline{D}'_{1,k} < \underline{D}_1$ (Figure~\ref{fig:starA}) indicates why the inequality holds) be the minimum KL divergence we then obtain in (\ref{eq:mlesharpbound}) when replacing $\meanspace_1^{\comp}$ by $\meanspace_1^{'\comp}$, applied for $n=1$; similarly we let $\breve{r}_{\infty}$ be the corresponding estimator for the second, continuous approach and $\hat{r}_{\infty}$ be the corresponding MLE in $\cR = \bd(\meanspace_1^{\comp})$ and $\underline{D}'_{1,\infty} = \underline{D}_1$ be the corresponding minimum KL divergence appearing in the bound.  Then the rightmost bounds in (\ref{eq:mlesharpbound}) will, respectively, look like
$$
e^{\mmreg_n(\breve{r}_k) -  n \underline{D}'_{1,k}} \leq
e^{\mmreg_n(\hat{r}_k) - n \underline{D}'_{1,k}}  \text{\ vs.\ }
e^{\mmreg_n(\breve{r}_{\infty}) - n \underline{D}_{1}} \leq
e^{\mmreg_n(\hat{r}_{\infty}) - n \underline{D}_1}.
$$
Since $\mmreg_n(\breve{r}_k)$ is always nonnegative, no matter the definition of $\breve{r}$ and the value of $k$ and $n$, whereas $\mmreg_n(\hat{r}_{\infty})$ is logarithmic, 
and $\underline{D}'_{1,k} < \underline{D}_1$ for all finite $k$, 
the continuous approach based on $k=\infty$ provides bounds that are  eventually exponentially better in $n$ compared to the bound based on any finite $k$. 

\paragraph{An Asymptotic GROW Result} In the KL ball setting, we can also say something about the asymptotic size of the worst-case optimal growth rate:
\begin{proposition}\label{prop:scaling}
In the KL ball setting above, let $\cH_1$ be any set of distributions that contains $\cE_1^{\bd}$ and with means contained in $\meanspace_1$, i.e. $\{{\mathbb E}_{P}[Y]: P \in \cH_1 \} \subset \meanspace_1$. Then the growth rate $\grow_n$ at sample size $n$ is given by 
    $$
    \grow_n = n \underline{D}  - \frac{d-1}{2} \log n + O(1).
    $$
\end{proposition}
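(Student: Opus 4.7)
The plan is to prove matching upper and lower bounds on $\grow_n(\cH_1)$, both tight to $O(1)$.

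For the upper bound, apply Proposition~\ref{prop:pythagorasminimax} to $\conv(\cH_1)$ to write $\grow_n(\cH_1) = \inf_{P \in \conv(\cH_1)} D(P \| P_0)$. Since $\cE_1^{\bd} \subset \cH_1$, this is at most $\inf_W D(\bar{P}_W \| P_0)$, where $W$ ranges over priors on $\bd(\meanspace_1^{\comp})$ and $\bar{p}_W(Y) := \int \bar{p}_\mu(Y) \, dW(\mu)$. Because $\meanspace_1^{\comp}$ is a KL-ball, $D(\bar{P}_\mu \| P_0) = n\underline{D}$ is constant on $\bd(\meanspace_1^{\comp})$, and the usual rearrangement yields
\[
D(\bar{P}_W \| P_0) = n\underline{D} - I(W), \qquad I(W) := \int D(\bar{P}_\mu \| \bar{P}_W) \, dW(\mu).
\]
By the standard minimax duality between channel capacity and minimax redundancy, $\sup_W I(W) = \mmred_n(\cE_1^{\bd})$, which by Theorem~\ref{thm:ClarkeB} equals $\tfrac{d-1}{2} \log n + O(1)$. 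Hence $\grow_n(\cH_1) \leq n\underline{D} - \tfrac{d-1}{2} \log n + O(1)$.

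For the lower bound, exhibit the explicit e-variable $S = \bar{p}_{W^*}(Y)/p_0(Y)$, with $W^*$ a Jeffreys-like prior on the $(d-1)$-dimensional manifold $\bd(\meanspace_1^{\comp})$ whose density is bounded away from $0$ by niceness. Writing $\bar{p}_{W^*}(Y)/p_0(Y) = \int e^{n \ell(\mu, Y)} \, dW^*(\mu)$ with $\ell(\mu, y) := \log \bar{p}'_\mu(y)/p'_0(y)$, apply the Donsker--Varadhan variational bound against a reference $R$ concentrated on an $n^{-1/2}$-ball (within $\bd(\meanspace_1^{\comp})$) around $\mu^{\bd}$, the unique intersection of the ray through $0$ and $\mu_P := {\mathbb E}_P[Y]$ with $\bd(\meanspace_1^{\comp})$. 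A Taylor expansion of $\ell(\cdot, y)$ at $\mu^{\bd}$, together with $D(R \| W^*) \leq \tfrac{d-1}{2}\log n + O(1)$, produces the pointwise estimate
\[
\log \bar{p}_{W^*}(y)/p_0(y) \geq n \, \ell(\mu^{\bd}, y) - \tfrac{d-1}{2} \log n - O(\|y - \mu^{\bd}\| + 1).
\]
Taking $P$-expectations and using the robustness property of exponential families gives, for $\lambda \geq 1$ the scalar with $\mu_P = \lambda \mu^{\bd}$,
\[
n \, {\mathbb E}_P[\ell(\mu^{\bd}, Y)] = n[\lambda \underline{D} + (\lambda - 1) \log Z'(\theta'(\mu^{\bd}))] \geq n \underline{D},
\]
since $\log Z'(\theta'(\mu^{\bd})) \geq 0$ by Jensen. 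This delivers ${\mathbb E}_P[\log S] \geq n\underline{D} - \tfrac{d-1}{2} \log n - O(1)$ uniformly in $P \in \cH_1$, matching the upper bound.

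The main obstacle is making the Donsker--Varadhan/Laplace step rigorous \emph{uniformly} in $P \in \cH_1$, specifically controlling the $O(\|Y - \mu^{\bd}\| + 1)$ error term after integrating against $P$. This requires (i) a Hessian bound for $\ell(\cdot, y)$ uniform in $y$ in the relevant region, which follows from smoothness of $\theta'(\cdot)$ on the compact manifold $\bd(\meanspace_1^{\comp})$ guaranteed by the finite MGF and niceness assumptions, and (ii) Cram\'er-type tail bounds under $P$ damping contributions from $Y$ far from $\mu_P$. Both are the same smooth exponential-family asymptotics already invoked in Theorems~\ref{thm:TakeuchiB} and~\ref{thm:ClarkeB}; no new conceptual difficulty arises, only careful bookkeeping.
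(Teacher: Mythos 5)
Your upper bound is essentially the paper's argument in dual form: the paper restricts the infimum to $\cE_1^{\bd}$, splits off $D(\bar{P}_{\mu}\|P_0)=n\underline{D}$, and identifies the remainder directly with $-\mmred_n(\cE_1^{\bd})$, then invokes the Clarke--Barron bound; you pass through mixtures $\bar{P}_W$ and the capacity $\sup_W I(W)$ and then need the redundancy--capacity equality (or the observation that Clarke--Barron's lower bound is itself proved via a prior) to reach the same place. That part is fine. The genuine problem is your lower bound. The paper's choice of $q_{\shtarkov,\hat{r}}$ is not incidental: its regret relative to $\bar{p}_{\hat{r}(y)}(y)$ is \emph{constant in $y$} and bounded by Takeuchi's theorem, so after subtracting $\mmreg_n(\hat{r})$ the only $P$-dependent quantity left is $\log\bar{p}_{r_0}(Y)/p_0(Y)$ for a fixed boundary point $r_0$, which is \emph{linear} in $Y$; its expectation depends on $P$ only through $\mu_P={\mathbb E}_P[Y]$, so no tail or moment control on $P$ is needed. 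Your fixed-prior mixture $\bar{p}_{W^*}$ cannot enjoy a $y$-uniform regret bound: already in the Gaussian case with $\bd(\meanspace_1^{\comp})$ a sphere, the regret of $\bar{p}_{W^*}$ against the boundary-restricted MLE grows like $\tfrac{d-1}{2}\log\|y\|$, and your cruder localization leaves the error term you wrote, $O(\|y-\mu^{\bd}\|+1)$ (indeed, with an $n^{-1/2}$-ball reference it is $O(\sqrt{n}\,\|y-\mu^{\bd}\|+1)$ unless the reference is centered so the linear term in $\theta'(\mu)-\theta'(\mu^{\bd})$ averages to $O(1/n)$).

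After taking $P$-expectations this error contains a contribution of order ${\mathbb E}_P\|Y-\mu_P\|$ (plus the $\|\mu_P-\mu^{\bd}\|$ part, which your margin $n(\lambda-1)(\underline{D}+\log Z')$ can absorb for large $n$). But the proposition allows \emph{any} $\cH_1\supseteq\cE_1^{\bd}$ whose means lie in $\meanspace_1$: members of $\cH_1$ need not belong to the exponential family, need not be $n$-fold averages, and need not possess an MGF or even a variance, so ``Cram\'er-type tail bounds under $P$'' are simply not available, and ${\mathbb E}_P\|Y-\mu_P\|$ is not uniformly bounded over $\cH_1$. Hence your lower bound, as argued, is uniform only over exponential-family-like alternatives, not over the $\cH_1$ in the statement; this is a missing idea, not bookkeeping, and the natural repair is exactly the paper's: replace the fixed-prior mixture by the Shtarkov/NML density on $\cR=\bd(\meanspace_1^{\comp})$ and use $\mmreg_n(\hat{r})\le\tfrac{d-1}{2}\log n+C$. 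On the positive side, your main-term computation --- comparing against the radial projection $\mu^{\bd}$, writing ${\mathbb E}_P[n\,\ell(\mu^{\bd},Y)]=n[\lambda\underline{D}+(\lambda-1)\log Z'(\theta'(\mu^{\bd}))]\ge n\underline{D}$ via Jensen --- is correct and is in fact a clean way to justify the comparison step $\inf_P{\mathbb E}_P[\log\bar{p}_{\hat{r}(Y)}(Y)/p_0(Y)]\ge n\underline{D}$ in the paper's own chain (use $\bar{p}_{\hat{r}(y)}(y)\ge\bar{p}_{\mu^{\bd}(P)}(y)$ pointwise, since $\mu^{\bd}(P)\in\cR$); combined with the NML construction it yields a complete proof.
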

\begin{proof}
    We have, with the supremum being taken over all probability densities $q$ for $Y$,
    \begin{align*}
        & \sup_q \inf_{P \in \cH_1} {\mathbb E}_{P}\left[
        \log \frac{q(Y)}{p_0(Y)}\right] \leq 
        \sup_q \inf_{\mu \in \bd(\meanspace_1^{\comp})} {\mathbb E}_{\bar{P}_{\mu}}\left[
        \log \frac{q(Y)}{p_0(Y)}\right] = \\
       &  \sup_q \inf_{\mu \in \bd(\meanspace_1^{\comp})} \left( \; {\mathbb E}_{\bar{P}_{\mu}}\left[
        \log \frac{q(Y)}{p_0(Y)} - \log \frac{\bar{p}_{\mu}(Y)}{p_0(Y)}\right] + {\mathbb E}_{\bar{P}_{\mu}}\left[
        \log \frac{\bar{p}_{\mu}(Y)}{p_0(Y)}\right] \; \right) =
        \\ & \sup_q \inf_{\mu \in \bd(\meanspace_1^{\comp})} {\mathbb E}_{\bar{P}_{\mu}}\left[
        \log \frac{q(Y)}{p_0(Y)} - \log \frac{\bar{p}_{\mu}(Y)}{p_0(Y)}\right] + n \underline{D}_1 = 
        - \frac{d-1}{2} \log n + O(1) + n \underline{D}_1.
    \end{align*}
    where the last equation follows from (\ref{eq:dminusone}).
    On the other hand, with $q_{\shtarkov,\hat{r}}$ defined as in (\ref{eq:shtarkovb}), we also have:
     \begin{align*}
        & \sup_q \inf_{P \in \cH_1} {\mathbb E}_{P}\left[
        \log \frac{q(Y)}{p_0(Y)}\right] \geq \\ & 
        \inf_{P \in \cH_1} {\mathbb E}_{P}\left[
        \log \frac{q_{\shtarkov,\hat{r}}(Y)}{p_0(Y)}\right] = 
       \inf_{\mu \in \meanspace_1} \inf_{P \in \cH_1: {\mathbb E}_P[Y] = \mu}
        {\mathbb E}_{P}\left[
        \log \frac{\bar{p}_{\hat{r}(Y)}(Y)}{p_0(Y)}\right] - \mmreg_n(\hat{r}) \geq
\\ &   \inf_{\mu \in \meanspace_1} \inf_{P \in \cH_1: {\mathbb E}_P[Y] = \mu} {\mathbb E}_{P}\left[
        \log \frac{\bar{p}_{\mu}(Y)}{p_0(Y)}\right] - \mmreg_n(\hat{r}) \overset{*}{=} \\ &
        \inf_{\mu \in \meanspace_1} {\mathbb E}_{\bar{P}_{\mu}}\left[
        \log \frac{\bar{p}_{\mu}(Y)}{p_0(Y)}\right] - \mmreg_n(\hat{r}) =  \inf_{\mu \in \meanspace_1} D(\bar{P}_{\mu} \| P_0) - \mmreg_n(\hat{r})
= \\ & n \underline{D}_1  - \frac{d-1}{2} \log n + O(1),
        \end{align*}
        where $(*)$ follows by rewriting the quantity inside the logarithm in terms of the mean-value parameterization and evaluating the expectation.
Combining the two displays above, the result follows.
\end{proof}

\section{Proof of Lemma~\ref{lem:brown} and further discussion of Theorem~\ref{thm:main_new}}
\label{sec:proof}
To prove Lemma~\ref{lem:brown}, we  first provide some background on {\em variation diminishing transformations\/} \citep{brown1981variation}.
\begin{definition}\label{def:SignFunction}
Let $\cX = \{x_1, \ldots, x_n \} $ be a finite subset of $\reals$ with $x_1 < x_2< \ldots < x_n$ and let $g: \cX \rightarrow \reals$ be a function, so that  $(g(x_1), \ldots, g(x_n)) \in \mathbb{R}^n$. We let $S^-(g)$ denote the number of sign changes of sequence $g(x_1), \ldots, g(x_n)$ where we ignore zeros; if $g$ is identically $0$ then we set $S^-(g)$ to $0$.  
\end{definition}
\begin{example}
If $g'(x_1,x_2,x_3) = (-1, 0, 1 )$, $g''(x_1,x_2,x_3)  = (-2, 1, 4 )$ and $g'''(x_1,x_2,x_4)  = (-2, 0, 0, 3 )$, then $S^-(g') = S^-(g'')= S^-(g''') =1$.
\end{example}

\begin{definition}
    Now consider arbitrary $\cX \subset \reals$ and let $g: \cX \rightarrow \reals$. For finite $\cV \subset \cX$, say $\cV= \{x_1, \ldots, x_n\}$ with $x_1 < \ldots < x_n$, we let $g_{\cV}= \{g(x_1), \ldots, g(x_n)\}$. We let $S^-(g)$ be the supremum of $S^-(g_{\cV})$ over all finite subsets $\cV$ of $\cX$.
\end{definition}
Intuitively, $S^-(g)$ is the number of times that the function $g(x) $ changes sign as $x \in \cX \subset \reals$ increases. 

\begin{lemma}\label{lemma:Exp_SVR}{\bf \citep[Example 3.1, Proposition 3.1]{brown1981variation}}
Let $P_0$ and $Y$ be as above (\ref{eq:expfam}), where $Y= Y_1$ is 1-dimensional, so $\cY \subseteq \reals$, and consider the 1-dimensional exponential family generated by $P_0$ as in (\ref{eq:expfam}). In the terminology of \cite{brown1981variation}, this family is  {\em SVR}$_{n}(\reals,\Theta)$ and hence   {\em SVR}$_{n}(\cY,\Theta)$ for all $n$. Rather than giving the precise definition of SVR (`strict variation reducing'), we just state the implication of this fact that we need: for any function $g: \cY \rightarrow \reals$ with $\int |g| d \rho > 0$ and $\gamma: \Theta \rightarrow \reals$ with $\gamma(\theta) := \int p_{\theta}(y)g(y) \rho(dy)$, we have: 
$S^{-}(\gamma) \leq S^{-}(g)$. 
\end{lemma}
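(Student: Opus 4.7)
The plan is to reduce the lemma to the classical fact that the exponential kernel $K(\theta,y) = e^{\theta y}$ is strictly totally positive of all orders, and then invoke the Karlin--Schoenberg variation-diminishing theorem. The statement of the lemma in \cite{brown1981variation} is essentially this specialization; my proposal outlines the mathematical content rather than merely citing it.

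\textbf{Step 1 (total positivity).} Fix an arbitrary $n \geq 1$, a sequence $\theta_1 < \cdots < \theta_n$ in $\Theta$, and a sequence $y_1 < \cdots < y_n$ in the support of $\rho$. I will show $\det(p_{\theta_i}(y_j))_{i,j=1}^n > 0$. After factoring out the positive row scalars $Z(\theta_i)^{-1}$ and the nonnegative column scalars $p_0(y_j)$ (positive on the support, with a zero column otherwise giving a vanishing determinant that is harmless for the sign-change argument below), this reduces to strict positivity of the generalized Vandermonde determinant $\det(e^{\theta_i y_j})$, a classical fact due to P\'olya (see Karlin, \emph{Total Positivity}, 1968, Ch.~1). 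A concise induction proof: row-reduce using $e^{\theta_1 y_j}$ factors, pull $e^{\theta_1 y_j}$ out of column $j$, and obtain a generalized Vandermonde in the variables $\theta_i - \theta_1$ for $i \geq 2$; the inductive hypothesis closes the argument.

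\textbf{Step 2 (variation diminishing).} Given Step 1, the bound $S^-(\gamma) \leq S^-(g)$ is the content of Karlin's variation-diminishing theorem for strictly totally positive kernels. The standard proof proceeds by contradiction: suppose $S^-(\gamma) \geq k+1$ where $k := S^-(g)$, and pick $\theta_0 < \cdots < \theta_{k+1}$ at which $\gamma$ takes strictly alternating signs. One approximates $g$ in $L^1(p_{\theta_i})$, uniformly in $i$, by a simple function supported on at most $k+1$ disjoint intervals with alternating signs, and applies the Cauchy--Binet formula to the $(k+2)\times(k+2)$ minors of the kernel matrix built from the values of $p_{\theta_i}$ at the endpoints of these intervals. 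Step~1 forces each such minor to be strictly positive, and a suitable signed sum then forces $(\gamma(\theta_0),\ldots,\gamma(\theta_{k+1}))$ to have at most $k$ sign changes, contradicting the choice of the $\theta_i$.

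\textbf{Step 3 (restriction to $\cY$).} The passage from SVR$_n(\reals,\Theta)$ to SVR$_n(\cY,\Theta)$ is immediate since $p_\theta$ vanishes on $\reals \setminus \cY$, so the defining integrals coincide whether taken over $\reals$ or over $\cY$. The main obstacle is the argument in Step~2: while Step~1 is a clean algebraic calculation, deducing $S^-(\gamma) \leq S^-(g)$ from total positivity requires a careful $L^1$-approximation of $g$ by simple functions that respect its sign-change count, together with the Cauchy--Binet expansion. In practice, since this is precisely the content of Karlin (1968, Theorem 5.1.1) and of \cite{brown1981variation} (Proposition~3.1), I would simply cite those results, using Step~1 only to verify that the hypotheses apply to our kernel $p_\theta(y)$.
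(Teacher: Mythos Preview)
Your proposal is correct, but note that the paper does not actually prove this lemma: it is quoted verbatim as a citation of \cite[Example 3.1, Proposition 3.1]{brown1981variation}, with no argument supplied. Your sketch---strict total positivity of the exponential kernel via the generalized Vandermonde determinant, followed by Karlin's variation-diminishing theorem---is precisely the classical route underlying that reference, so you are supplying the content the paper chose to outsource. One small remark on Step~2: the approximation-and-Cauchy--Binet outline you give is rougher than the rest, and your own final paragraph already concedes this by falling back on a citation; if you want a self-contained write-up, the cleanest path is the composition-formula argument in Karlin (1968, Ch.~1 and Thm.~5.1.1) rather than the ad hoc contradiction you sketch.
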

In words, for any function $g$ as above, the number of sign changes of ${\bf E}_{P_{\theta}}[g(Y)]$ as we vary $\theta$ is bounded by the number of sign changes of $g$ itself on $y\in \cY \subset \reals$. 
Since, in one-dimensional full exponential families, $\mu(\theta)$ is a continuous, strictly increasing function of $\theta$, this also implies that, for any function $g$, the expectation $\gamma(\mu) := {\bf E}_{\bar{P}_{\mu}}[g(Y)]$ also satisfies $S^{-}(\gamma) \leq S^{-}(g)$.

Now 
for any constant $c \in \reals$, 
any $w^{\circ} \in [0,1]$, 
we set $g_c(y) = c+ \log \frac{
(1- w^{\circ}) \bar{p}_{\mu_1^-}(y)+ w^{\circ} \bar{p}_{\mu_1^+}(y)}{p_0(y)}$.
A little calculation of the 
derivatives shows that $g_c(y)$  is strictly convex on $\conv(\cY)$ and not monotonic. Therefore, $g_c(y)$ has exactly one minimum point $y$ and is strictly monotonic on both sides of $y$. 
Thus, $g_c(y)$ as a function of $y \in \conv(\cY)$ changes sign twice; $g_c(y)$'s domain being restricted to $\cY$ (which is not the same as $\conv(\cY)$ in the discrete case), it changes sign at most twice. Thus, for all $c \in \reals$, 
we have $S^-(g_c) \leq 2$. 
Lemma \ref{lemma:Exp_SVR} thus implies that $S^-(\gamma_c(\mu)) \leq 2$ with $\gamma_c(\mu) := f(\mu,w^{\circ})+c$ where $f(\mu,w)$ is defined as in (\ref{eq:brenda}).   Therefore, we know that $g_0(\mu) = f(\mu,w^{\circ})$ as a function of $\mu$ can at most have one minimum point achieved at some $\mu^*$ and if it has such a minimum point, it must be strictly monotonic on both sides of $\mu^*$.

Now $f(0,w^{\circ}) = {\bf E}_{P_0}[g_0(Y)] = - D(\bar{P}_0 \|P_{W_1^{\circ}}) < 0$; but by (\ref{eq:jantje}), which we already showed, $f(\mu^+_1,w^{\circ}) = f(\mu^-_0,w^{\circ}) > 0$. It follows that a $\mu^*$ as mentioned above must exist, and that it lies in between $\mu^-_0$ and $\mu^+_0$; the result follows. 
\paragraph{Why the case $d > 1$ is complicated}
We only managed to prove a general GROW result for surrounding $\cH_1$ for $d=1$. To give the reader an idea where the difficulties lie, we first discuss the case $d=1$ a little more.
One may wonder why even there, we had to resort, via Lemma~\ref{lem:brown}, to the pretty sophisticated theory of variation diminishing transformations. It would seem much simpler to directly calculate the derivative  $(d/ d\mu) f(\mu,w^{\circ})$ and show that, for appropriate choice of $w^{\circ}$, the derivative is $0$ at some $\mu^*$ within the interval $(\mu^-_1,\mu^+_1)$, and negative to the left and positive to the right of $\mu^*$; this would lead to the same conclusion as just stated. Yet the derivative is  given by
\begin{equation}\label{eq:derivative}
\frac{d}{d\mu } f(\mu,w^{\circ}) = \sigma^2_\mu \cdot \left( {\bf E}_{\bar{P}_{\mu}}
\left[  Y \cdot g_0(Y) \right] - 
{\bf E}_{\bar{P}_{\mu}}
\left[  Y \right] \cdot {\bf E}_{\bar{P}_{\mu}}
\left[  g_0(Y) \right]
\right),
\end{equation}
where $\sigma^2_\mu = {\bf E}_{\bar{P}_{\mu}}[Y^2]-({\bf E}_{\bar{P}_{\mu}}[Y])^2$ is the variance of $\bar{P}_{\mu}$.
While (\ref{eq:derivative}) looks `clean', it is not easy to analyze --- for example, it is not a priori clear whether the derivative can be $0$ in only one point. Taking further derivatives does not help either in this respect; for example, the second derivative is not necessarily always-positive.  

Another `straightforward' route to show the result via differentiation might be the following. We fix any prior with finite support, with positive probability $(1-\alpha) w(\mu^+_1)> 0$ on $\{\mu^+_1\}$ and $(1-\alpha) w(\mu^-_1)> 0 $ on  $\{\mu^-_1\}$ and, for $j=1, \ldots, k$, prior $ \alpha w_j$ on $\mu'_j \in \meanspace_1 \setminus \bd(\meanspace_1^{\comp})$, i.e. $\mu'_j$ 
is not on the boundary of  $\meanspace_1$, so that $\sum_{j=1}^k w_j  = w(\mu^+_1) +w(\mu^-_1)=1$, $0 \leq \alpha \leq 1$. We let
$$
q_{\alpha}(Y) := \sum_{j=1}^k w_j  \bar{p}_{\mu_j}(Y) +
w(\mu^+_1) \bar{p}_{\mu^+_1}(Y)  +w(\mu^-_1) \bar{p}_{\mu^-_1}(Y).
$$
Then, if we could show that the KL divergence
\begin{align}\label{eq:deragain}
{\bf E}_{Q_{\alpha}} \left[
\log \frac{q_{\alpha}(Y)}{p_0(Y)}
\right]
\end{align}
were minimized by setting $\alpha =0$, it would follow, by applying Theorem~\ref{thm:ghk}, that $S_{\grow}$ is given by 
$p^*(Y)/p_0(Y)$ where $p^*(Y)$ must be of the form $w(\mu^-_1) \bar{p}_{\mu^-_1} + w(\mu^+_1) \bar{p}_{\mu^+_1}$.

Yet, if we try to show this by differentiating (\ref{eq:deragain}) with respect to $\alpha$, we end up with a similarly hard-to-analyze expression as (\ref{eq:derivative}), and it is again not clear how to proceed. 

These difficulties with showing the result in a straightforward way, by differentiation, only get exacerbated if $d> 1$. So, instead, we might try to extend the above lemma based on variation diminishing transformations to the case $d > 1$. But, literally quoting \cite[Chapter 2]{Brown86}, `results concerning sign changes for multidimensional families appear very weak by comparison to their univariate cousins', and indeed we have not found any existing result in the literature that allows us to extend the above lemma to $d > 1$.

\section{Discussion, Conclusion, Future Work}\label{sec:discuss}
We have shown how GROW e-variables relative to  alternative $\cH_1$ defined in terms of a set of means $\meanspace_1$ relate to a CSC probability bound on an event defined by the same  $\meanspace_1$. We first considered the case of convex $\meanspace_1$;  here our work consisted mostly of reformulating and re-interpreting existing results.  We then considered nonconvex, surrounding $\meanspace_1$. We showed how GROW and the individual-sequence-regret type of {\em relative\/} GROW again relate to a version of the CSC theorem, and we established some additional results  for the case that $\meanspace_1^{\comp}$ is a fixed-radius KL ball for sample size $1$, whereas we let the actual sample size $n$ grow.  
As far as we are aware, our CSC bounds for surrounding $\meanspace_1$ that are KL balls are optimal for this setting.
It is of some interest though to consider the alternative setting in which, at sample size $n$, we consider a KL ball that has a fixed, or very slowly growing, radius when considering distributions on $Y= n^{-1} \sum_{i=1}^n X_i$ rather than on $X$. Thus, instead of (\ref{eq:klball}) we now set, at sample size $n$, 
\begin{equation}\label{eq:klballb}
\meanspace_1^{\comp}  = \{ \mu: D(\bar{P}_{\mu} \| P_0) <  \underline{D}_n \} =  \{ \mu: D(\bar{P}'_{\mu} \| P'_0) <  \underline{D}_n/n \},\end{equation}
where either $\underline{D}_n = \underline{D}$ is constant, or very slowly growing in $n$. First consider the case that it is constant. Then, in terms of a single outcome, the corresponding ball in Euclidean (parameter space) shrinks at rate $1/{\sqrt{n}}$, the familiar scaling when we consider classical parametric testing. Since the boundary $\bd(\meanspace_1^{\comp})$ now changes with $n$, the asymptotics (\ref{eq:dminusone}) we established above are not valid anymore. Therefore, while the CSC Theorem~\ref{thm:nml} is still valid, it may be hard to evaluate the bound (\ref{eq:mlesharpbound}) that it provides.

Now, for the setting (\ref{eq:klballb}), we may also heuristically apply the multivariate Central Limit Theorem (CLT): a second order Taylor approximation of $D(\bar{P}_{\mu} \| P_0)$ in a neighborhood of $\mu=0$ gives that, up to leading order, 
$D(\bar{P}_{\mu} \| P_0)= \mu^{\top} J(0) \mu$, with $J(\mu)$ the Fisher information matrix of $\cE$ in terms of the mean-value parameterization, which is equal to the inverse of the covariance matrix.  The multivariate CLT then immediately gives that, as $n \rightarrow \infty$, we have that  $P_0(Y \in \meanspace_1^{\comp}) \rightarrow A$, where $A$ is the probability that a normally distributed random vector $V$, i.e. $V \sim N(0,I)$, with $I$ the $(d-1) \times (d-1)$ identity matrix, falls in a Euclidean ball of radius $\sqrt{\underline{D}}$. This implies that the bound (\ref{eq:mlesharpbound}) would only remain relevant if for all large $n$, its right-hand side evaluates to a constant smaller than $1$. We currently do not know if this is the case; it is an interesting question for future work.

Now let us consider the 
 scaling (\ref{eq:klballb}) for the 
case that $\underline{D}_n$ is growing at the very slow rate $a (\log(b+ c\log n)$ for suitable $a, b$ and $c$. \citep{kaufmann2021mixture} give an {\em anytime-valid bound} for this case, in which the right-hand side is also a nontrivial constant (i.e.  $< 1$), for all large enough $n$. 
Again, we do not know if we can replicate such bounds with our analyses --- it is  left for future work to determine this, and to further analyze the relation between anytime-valid bounds and the  bounds we derived here, which are related to e-values and hence indirectly related to anytime-validity, but are not anytime-valid themselves.

\bibliography{savi,references,peter}
\end{document}